\newtheorem{theorem}{Theorem}[section]
\newtheorem{lemma}[theorem]{Lemma}
\newtheorem{proposition}[theorem]{Proposition}
\newtheorem{definition}[theorem]{Definition}
\newtheorem{remark}[theorem]{Remark}
\newtheorem{example}[theorem]{Example}
\newenvironment{proof}[1][]{\paragraph{Proof{#1}}}{\hfill\null$\square$}
\newcommand{\techrep}[1]{#1}
\newcommand{\lorip}[1]{}
\newcommand{\srp}[1]{ }
\newcommand{\future}[1]{}
\newcommand{\defstyle}{\textbf}
\newcommand{\Logicname}[1]{\ensuremath{\mathsf{#1}}}
\newcommand{\WLC}{\Logicname{WLC}\xspace}
\newcommand{\cir}{\ensuremath{\mathsf{{cir}}\xspace}}
\newcommand{\cut}[1]{}
\begin{document}

\title{Rational coordination \\ with no communication or conventions}

\author{
Valentin Goranko$^{1}$,
Antti Kuusisto$^{2,3}$, 
Raine R\"onnholm$^{3,4}$
\\ \\
\normalsize $^{1}$Stockholm University, Sweden \\
\normalsize $^{2}$University of Helsinki, Finland\\
\normalsize $^{3}$Tampere University, Finland\\
\normalsize $^{4}$ENS Paris-Saclay, France
}

\date{}

\maketitle

\begin{abstract}
We study pure coordination games where in every outcome, all players have identical payoffs, `win' or `lose'. We identify and discuss a range of 
`purely rational principles' guiding the reasoning of rational players in such games and compare the  classes of coordination games that can be solved by such players with no preplay communication or conventions.
We observe that it is highly nontrivial to delineate a boundary between purely rational principles and other decision methods, such as conventions, for solving such coordination games.
\end{abstract}

\medskip

\section{Introduction} 
\label{sec:intro} 

\emph{Coordination games} (\cite{Lewis69})   
are games in strategic form with several pure strategy Nash equilibria with the same or comparable payoffs for every player. In these games, all players have the mutual interest to select one of these equilibria. 
In \emph{pure coordination games} (\cite{Lewis69}), aka \emph{games of common payoffs} (\cite{Leyton2008}), all players in the game receive the same payoffs and thus the players have fully aligned preferences to coordinate with each other in order to reach the best possible outcome for everyone. 
In this paper we study one-step \emph{pure win-lose coordination games} (\WLC games) in which all payoffs are either  $1$ (i.e., \emph{win}) or $0$ (i.e., \emph{lose}). 
%

Clearly, if players can communicate when playing a pure coordination game with at least one winning outcome, then they can simply agree on a winning strategy profile, so the game is trivialised. 
What makes such games non-trivial is limited (or non-existing) possibility of preplay communication amongst the players\footnote{Note that, while the common use of `preplay communication' in game theory means communication before the given game is played, here we also mean communication before the players are \emph{even presented with the game}.}, meaning that the players must make their choices based on individual reasoning---without any contact with the other players before (or during) playing the game.

There are many natural real-life situations where such coordination scenarios occur.
We give two examples:

\begin{itemize}
\item[(A)] two cars driving towards each other (on a narrow street) that can avoid a collision by swerving either to the right or left,

\item[(B)] a group of $n$ people who get separated in a city and must individually decide on a place where to get together (`regroup'), supposing the group members do not have any way of contacting each other. 
\end{itemize}

%
The notion of \emph{convention} is an important concept that emerges in the context of coordination.
%
Following Lewis' seminal book \cite{Lewis69}, much of the literature on the topic focuses on \emph{social conventions} which are not explicit agreements established via communication, but rather regularities emerging in social behaviour that every individual believes everyone to follow. If indeed followed by everyone, such conventions help resolve social coordination problems. For example, most words used in natural languages are examples of naturally emerged social conventions, whereas using the metric system instead of some other set of measures is likely to be
mainly based on an explicit agreement.

Both types of conventions may be suitable in different situations. For instance, 
in the scenario (A) above, a collision could be avoided by using the explicitly agreed  convention (traffic rule) that cars should always swerve to the right, whereas in (B), everyone could go to a famous meeting spot in the city established by a social convention, e.g., the main entrance of the main railway station.
The main entrance of the main railway station is an example of a possible \emph{focal point} (Schelling \cite{Schelling60})
that sticks out and therefore may be likely to emerge as, for example, an obvious meeting point.

In most of this paper we assume that players share \emph{no conventions} at all, and we also assume \emph{no
preplay communication}.
Thus, in our basic setting, the players play completely independently of
each other and could be assumed to come from entirely different kinds of cultures, or even from different galaxies for that matter. 
The principal question in this paper is the following:

\begin{quote}
\emph{What kinds of reasoning can be accepted as purely rational, based on no
communication and no conventions? And which
pure win-lose coordination games can be solved by such reasoning?}
\end{quote}

\noindent
Thus we identify ``\emph{purely rational principles}'' that \emph{every} ideally rational player ought to follow in \emph{every} \WLC game. We also study the hierarchy of games solvable by such principles.
For instance, it is intuitively clear that coordination by pure rationality is not possible in the example situations (A) and (B) above. However, we will see that there are many natural coordination scenarios in which it seems clear that rational players can coordinate successfully by following purely rational reasoning principles, without preplay communication or conventions.

Initially, we assume only \emph{individual rationality}, i.e., that every player acts rationally with the aim to win the game but without assumptions about the other players' rationality.
%
%
Later we additionally assume \emph{common belief in rationality}, i.e., that every player is individually rational and that it is commonly believed amongst the players that every player is rational.

Towards the end of the article, we
move from pure rationality to the setting with conventions. Thus, as 
conventions could indeed arise from preplay communication, we now move to the setting where the
assumption of ``\emph{no preplay communication}'' is essentially banished.\footnote{In particular, we are mainly interested in the
scenario where the players can adopt conventions (for example through negotiations) \emph{only before} being 
presented with the particular game to play.}  
We consider a special class of conventions
that we call \emph{structural conventions}. These are conventions only based on
structural properties of the game
(i.e., essentially properties invariant
under game isomorphisms)
rather than on ad hoc
features, such as names of the choices in the game.


All through the paper we make the following assumptions.
\begin{itemize}
\item We only consider complete information games, i.e., the game structure is common knowledge amongst the players.

\item It is common knowledge amongst the players that
they all have the same goal, which is to win, i.e., to select together a winning choice profile.
\end{itemize}

The main outcomes of our study are as follows:

\begin{enumerate}
\itemsep = 0pt
\item[(i)] Concerning the scenario with no communication and no conventions, 
we identify several different kinds of reasoning principles and provide justifications for them.
These justifications have varying levels of common acceptability. We observe that the boundary of
purely rational principles and other principles is
highly nontrivial to demarcate\footnote{Schelling shares this view on pure coordination games (see \cite{Schelling60}, p. 283, n. 16).}.  Indeed, the question of what constitutes a purely rational principle
seems open-ended and depends on different background assumptions.

\item[(ii)] On the other hand, the class of  \WLC games that are solvable by using structural conventions is precisely characterized in Section \ref{sec:conventions} in terms of structural properties (involving symmetries) of games.
\end{enumerate}

There exist a wide range of works closely related to the current paper. Indeed, coordination and rationality are natural and relevant topics and have thus
been studied in various different contexts, e.g., in 
\cite{Bicchieri95},
\cite{gauthier1975coordination},
\cite{GeneserethGR86},
\cite{gilbert1990rationality}, 
\cite{SugdenRationalChoice91}. 
The theory of focal points has been extensively developed in the context of coordination, e.g., in 
 \cite{Schelling60},  
\cite{gilbert1990rationality}, 
\cite{SugdenRationalChoice91}, 
\cite{mehta1994focal}, 
\cite{sugden1995theory}, 
\cite{Janssen2001},
\cite{BinmoreSamuelson2002}, 
\cite{Bardsley2009}. 
%
%
Conventions have also been studied, e.g., in \cite{Lewis69}, \cite{Sugden89},  \cite{gilbert1990rationality}. 
Furthermore, we note the close conceptual relationship of the present study with the notion of \emph{rationalisability} of strategies \cite{Bernheim84}, \cite{Pearce84}, \cite{FudenbergTirole91},  which is particularly important in epistemic game theory. 
We also mention two recent relevant works---related to logic---to which the observations and results in the present paper could be directly applied: in \cite{hawke2017logic}, two-player coordination games are related to a variant of \emph{Coalition Logic}\footnote{In fact, the initial motivation for the present work came from concerns with the semantics of Alternating time temporal logic ATL, extending Coalition Logic.}, and in \cite{Barbero17}, coordination scenarios are analysed with respect to the game-theoretic semantics of \emph{Independence Friendly Logic}.
Finally, the study on structural conventions with respect to `random game graphs' in \cite{KL2017} adopts the definitions that were originally conceived in the work leading to the current paper.

\lorip{
An extended version of this paper, containing more examples and technical details, is available as a companion technical report \cite{LORI-VI-techrep}.
In addition to the theoretical work presented here, we have also run some empirical experiments on people's behaviour in certain \WLC games. One of our tests can be accessed from the link given in the technical report \cite{LORI-VI-techrep}. 
}

%
%
%
%
This paper is a substantially extended version of the conference paper \cite{lori2017}, while  \cite{sr2017} is an extended abstract of a workshop presentation of a part of this work.

\lorip{\vspace{-1mm}}
\section{Pure win-lose coordination games} 
\label{sec:prelim} 

\lorip{\vspace{-1mm}}
\subsection{The setting} 
\lorip{\vspace{-1mm}}

A  \emph{pure win-lose coordination game} $G$ is a strategic form game with $n$ players ($1,\dots,n$) whose available \emph{choices} (\emph{moves}, \emph{actions}) are given by  sets $\{C_i\}_{i\leq n}$. The set of winning \emph{choice profiles} 
is presented by an $n$-ary \emph{winning relation} $W_G$.
For technical convenience and simplification of some definitions, we present these games as \emph{relational structures} (see, e.g., \cite{ebbinghaus}). A formal definition follows. 

\lorip{\vspace{-1mm}}

\begin{definition}
An $n$-player \defstyle{win-lose coordination game (\WLC game)}
is a relational structure $G=(A,C_1,\dots,C_n,W_G)$ where $A$ is a finite
 domain of \defstyle{choices}, each $C_i\neq\emptyset$ is a unary predicate (i.e., a subset of $A$), representing the choices of player $i$, s.t. $C_1\cup\cdots\cup C_n=A$,  
and $W_G$ is an $n$-ary relation in $A$ such that $W_G\subseteq C_1\times\cdots\times C_n$.
For technical convenience, here we also  assume that the players have pairwise disjoint choice sets, i.e., 
$C_i\cap C_j=\emptyset$ for every  $i,j\leq n$ such that $i\neq j$.
A tuple $\sigma\in C_1\times\cdots\times C_n$ is called a \defstyle{choice profile} for $G$
and the choice profiles in $W_G$ are called \defstyle{winning choice profiles}.
\end{definition}

We use the following terminology for any \WLC game $G=(A,C_1,\dots,C_n,W_G)$.

\lorip{\vspace{-1mm}}

\begin{itemize}[leftmargin=*]
\techrep{
\item The \defstyle{losing relation of $G$} is the relation $L_{G} := C_1\times\cdots\times C_n\setminus W_G$. A choice profile $\sigma\in L_G$ is called \defstyle{losing}.

\smallskip
}

\techrep{
\item The \defstyle{complementary game of $G$} is the game $\overline{G}:=(A,C_1,\dots,C_n,L_G)$.

\smallskip
}

\item Let $A_i\subseteq C_i$ for every $i\leq n$. The
\defstyle{restriction} of $G$ to $(A_1,\dots,A_n)$ is the
game \\ $G\upharpoonright (A_1, \dots, A_n):=(A_1\cup\cdots\cup A_n,\,A_1,\dots,A_n,\,W_G\upharpoonright A_1\times\cdots\times A_n)$. 

\smallskip

\item For every choice $c\in C_{i}$ of a player $i$, the \defstyle{winning extension of $c$ in $G$} is the set $W^i_{G}(c)$ of all tuples $\tau\in C_1\times\cdots\times C_{i-1}\times C_{i+1} \times\cdots\times C_n$ such that the choice profile obtained from $\tau$ by adding $c$ to the $i$-th position is winning.
\techrep{We define the \defstyle{losing extension of $c$ in $G$} analogously.}
 
\smallskip
 
\item A choice $c\in C_{i}$  of a player $i$ is \defstyle{(surely) winning}, respectively  \defstyle{(surely) losing}, if 
it is guaranteed to produce a winning (respectively  losing) choice profile regardless of what choices the other player(s) make. 
Note that $c$ is a winning choice iff $W^i_{G}(c)=C_1\times\cdots\times C_{i-1}\times C_{i+1}\times\cdots\times C_n$. Similarly, $c$ is a losing choice iff $W^i_G(c)=\emptyset$.

\end{itemize}

\lorip{\vspace{-1mm}}

\begin{example}\label{ex: 3-player game}
We present here a 3-player coordination story which will be used as a running example hereafter.
Three robbers, Casper, Jesper and Jonathan\footnote{This example is based on the children's book \emph{When the Robbers Came to Cardamom Town} by Thorbj\o{}rn Egner, featuring the characters  Casper, Jesper and Jonathan.}  
 are planning to quickly steal a cake from the bakery of Cardamom Town while the baker is out. They have two possible plans to enter the bakery: either (a) to break in through the front door or (b) to sneak in through a dark open basement. For (a) they need a \emph{crowbar} and for (b) a \emph{lantern}. The baker keeps the cake on top of a high cupboard, and the robbers can only reach it by using a \emph{ladder}.

When approaching the bakery, Casper is carrying a crowbar, Jesper is carrying a ladder and Jonathan is carrying a lantern. However, the robbers cannot agree whether they should follow plan (a) or plan (b). While the robbers are quarreling, suddenly Constable Bastian appears and the robbers all flee to different directions. After this the robbers have to individually decide whether to go to the front door (by plan (a)) or to the basement entrance (by plan (b)). They must take the right decision fast before the baker returns.

The scenario described here can naturally be modeled as a \WLC game. We relate Casper, Jesper and Jonathan with players 1, 2 and 3, respectively. Each player $i$ has two choices $a_i$ and $b_i$ which correspond to either going to the front door or to the basement entrance, respectively. The robbers succeed in obtaining the cake if both Casper and Jesper go to the front door (whence it does not matter what Jonathan does). Or, alternatively, they succeed if both Jonathan and Jesper go to the basement (whence the choice of Casper is irrelevant). This coordination scenario corresponds to the following \WLC game $G^* = (\{a_1,b_1,a_2,b_2,a_3,b_3\}, C_1, C_2, C_3, W_{G^*})$, where for each player $i$, $C_i=\{a_i,b_i\}$ and $W_{G^*}=\{(a_1,a_2,a_3),(a_1,a_2,b_3),(a_1,b_2,b_3),(b_1,b_2,b_3)\}$. (For a graphical presentation of this game, see Example~\ref{ex: game graph} below.)
\end{example}

\subsection{Presenting \WLC games as hypergraphs} 
 
The $n$-ary winning relation $W_G$ of an $n$-player \WLC game $G$ defines a \emph{hypergraph} on the set of all choices.
We give visual presentations of hypergraphs corresponding to \WLC games as follows. The choices of each player are displayed as columns of nodes starting from the choices of player 1 on the left and
ending with the column with choices of player $n$. The winning relation consists of lines that go through some choice of each player\footnote{In pictures these lines can be drawn in different styles, to set them apart.}.  
This kind of graphical presentation of a \WLC game $G$ will be called a \emph{game graph 
(drawing) 
of $G$}.
(Note that game graphs of 2-player \WLC games are simply bipartite graphs.)

\vspace{5mm}

\begin{example}\label{ex: game graph}
The \WLC game $G^*$ in Example~\ref{ex: 3-player game} has the following game graph:

\vspace{-2mm}

\begin{center}
\begin{tikzpicture}[scale=1.3,choice/.style={draw, circle, fill=black!100, inner sep=2.4pt}]
	\node at (-0.9,1.3) {$G^*:$};
	\node at (0,0) [choice] (00) {};
	\node at (0,0) [choice] (00) {};
	\node at (1,0) [choice] (10) {};
	\node at (2,0) [choice] (20) {};
	\node at (0,1) [choice] (01) {};
	\node at (1,1) [choice] (11) {};
	\node at (2,1) [choice] (21) {};
	\draw[thick] ([yshift=-0.5mm]00.east) to ([yshift=-0.5mm]10.west);
	\draw[very thick,densely dashed] ([yshift=-0.6mm]01.east) to ([yshift=-0.6mm]11.west);
	\draw[thick] ([yshift=0.5mm]01.east) to ([yshift=0.5mm]11.west);
	\draw[thick] ([yshift=-0.5mm]10.east) to ([yshift=-0.5mm]20.west);
	\draw[very thick,densely dashed] ([yshift=0.6mm]10.east) to ([yshift=0.6mm]20.west);
	\draw[thick] ([yshift=0.5mm]11.west) to ([yshift=0.5mm]21.west);
	\draw[very thick,densely dashed] (01) to (10);
	\draw[very thick,densely dashed] (11) to (20);
	\node at (0,1.3) {\small $a_1$};
	\node at (1,1.3) {\small $a_2$};
	\node at (2,1.3) {\small $a_3$};
	\node at (0,-0.3) {\small $b_1$};
	\node at (1,-0.3) {\small $b_2$};
	\node at (2,-0.3) {\small $b_3$};
\end{tikzpicture}
\end{center}
\end{example}

\vspace{-2mm}

\lorip{
We now introduce a uniform notation for certain simple classes of \WLC games. 
}
\techrep{
We now define several simple types of \WLC games and introduce a uniform notation for them. Since 
no names of choices are given for these games, each game given here actually corresponds to a class of games with the same structure. However, in this paper we usually consider such games to be equivalent\footnote{If a player reasons by pure rationality, the names of the choices should not have an effect on that player's reasoning. We will discuss further this issue later on.}.
}
Let $k_1,\dots, k_n\in\mathbb{N}$. 
\begin{itemize}[leftmargin=*]
\item $G(k_1\times \cdots \times k_n)$ is the $n$-player \WLC game where the player $i$ has $k_i$ choices and the winning relation is the \emph{universal relation} $C_1\times \cdots \times C_n$. 

\smallskip

\item $G(\overline{k_1\times \cdots \times k_n})$ 
is the $n$-player \WLC game where the player $i$ has $k_i$ choices and the winning relation is the \emph{empty relation}. 
%
Note that with this notation $G(\overline{k_1\times \cdots \times k_n})=\overline{G(k_1\times \cdots \times k_n)}$.
%
Some examples: 

\vspace{-2mm}

\begin{center}
\begin{tikzpicture}[scale=0.5,choice/.style={draw, circle, fill=black!100, inner sep=2.1pt}]
	\node at (1,3) {\small\bf $G(2\times 3)$};	
	\node at (0,0.5) [choice] (00) {};
	\node at (0,1.5) [choice] (01) {};
	\node at (2,0) [choice] (20) {};
	\node at (2,1) [choice] (21) {};
	\node at (2,2) [choice] (22) {};
	\draw[thick] (00) to (20);	
	\draw[thick] (00) to (21);
	\draw[thick] (00) to (22);
	\draw[thick] (01) to (20);	
	\draw[thick] (01) to (21);
	\draw[thick] (01) to (22);
\end{tikzpicture}
\qquad\quad
\begin{tikzpicture}[scale=0.5,choice/.style={draw, circle, fill=black!100, inner sep=2.1pt}]
	\node at (1.5,3) {\small\bf $G(\overline{1\times 3\times 1})$};	
	\node at (0,1) [choice] {};	
	\node at (1.5,0) [choice] {};
	\node at (1.5,1) [choice] {};
	\node at (1.5,2) [choice] {};
	\node at (3,1) [choice] {};
\end{tikzpicture}
\qquad\quad
\begin{tikzpicture}[scale=0.5,choice/.style={draw, circle, fill=black!100, inner sep=2.1pt}]
	\node at (1.5,3) {\small\bf $G(1\times 1\times 2)$};	
	\node at (0,1) [choice] (1) {};	
	\node at (1.5,1) [choice] (2) {};
	\node at (3,1.7) [choice] (3a) {};
	\node at (3,0.3) [choice] (3b) {};
	\node at (0,0) (dummy) {};
	\draw[thick] (1.north) to (2.north);
	\draw[thick, densely dashed] (1.south) to (2.south);
	\draw[thick] (2.north) to (3a);
	\draw[thick, densely dashed] (2.south) to (3b);
\end{tikzpicture}
\end{center}

\vspace{-1mm}

\item Let $k\in\mathbb{N}$. 
We write $G(Z_k)$ for the \emph{2-player} \WLC game in which both players have $k$ choices and the winning relation forms a single path that goes through all the choices (see below for an example). Similarly, $G(O_k)$, where $k\geq 2$, denotes the $2$-player \WLC game where the winning relation forms a $2k$-cycle that goes through all the choices. These are exemplified by the following:

\vspace{-2mm}

\begin{center}
\begin{tikzpicture}[scale=0.5,choice/.style={draw, circle, fill=black!100, inner sep=2.1pt}]
	\node at (1,2) {\small\bf $G(Z_2)$};	
	\node at (0,0) [choice] (00) {};
	\node at (2,0) [choice] (20) {};
	\node at (0,1) [choice] (01) {};	
	\node at (2,1) [choice] (21) {};
	\node at (0,-1) {};
	\draw[thick] (01) to (21);
	\draw[thick] (00) to (20);	
	\draw[thick] (00) to (21);
\end{tikzpicture}
\qquad\qquad
\begin{tikzpicture}[scale=0.5,choice/.style={draw, circle, fill=black!100, inner sep=2.1pt}]
	\node at (1,3) {\small\bf $G(Z_3)$};	
	\node at (0,0) [choice] (00) {};
	\node at (2,0) [choice] (20) {};
	\node at (0,1) [choice] (01) {};	
	\node at (2,1) [choice] (21) {};
	\node at (0,2) [choice] (02) {};	
	\node at (2,2) [choice] (22) {};
	\draw[thick] (01) to (21);
	\draw[thick] (00) to (20);	
	\draw[thick] (00) to (21);
	\draw[thick] (02) to (22);
	\draw[thick] (01) to (22);
\end{tikzpicture}
\qquad
\begin{tikzpicture}[scale=0.5,choice/.style={draw, circle, fill=black!100, inner sep=2.1pt}]
	\node at (1,2) {\small\bf $G(O_2)=G(2\times 2)$};	
	\node at (0,0) [choice] (00) {};
	\node at (2,0) [choice] (20) {};
	\node at (0,1) [choice] (01) {};	
	\node at (2,1) [choice] (21) {};
	\node at (0,-1) {};
	\draw[thick] (01) to (21);
	\draw[thick] (01) to (20);
	\draw[thick] (00) to (20);	
	\draw[thick] (00) to (21);
\end{tikzpicture}
\qquad
\begin{tikzpicture}[scale=0.5,choice/.style={draw, circle, fill=black!100, inner sep=2.1pt}]
	\node at (1,3) {\small\bf $G(O_3)$};	
	\node at (0,0) [choice] (00) {};
	\node at (2,0) [choice] (20) {};
	\node at (0,1) [choice] (01) {};	
	\node at (2,1) [choice] (21) {};
	\node at (0,2) [choice] (02) {};	
	\node at (2,2) [choice] (22) {};
	\draw[thick] (00) to (20);	
	\draw[thick] (00) to (21);
	\draw[thick] (02) to (22);
	\draw[thick] (01) to (22);
	\draw[thick] (02) to (21);
	\draw[thick] (01) to (20);
\end{tikzpicture}
\end{center}

\vspace{-1mm}

\item Suppose that $G(A)$ and $G(B)$ have been defined, both having the same number of players.
Then $G(A + B)$ is the \emph{disjoint union} of $G(A)$ and $G(B)$, i.e.,
the game obtained by assigning to each player a disjoint union of her choices in 
 $G(A)$ and $G(B)$, and where the winning relation for $G(A + B)$ is the union of the winning relations in $G(A)$ and $G(B)$. Some examples:


\begin{center}
\begin{tikzpicture}[scale=0.5,choice/.style={draw, circle, fill=black!100, inner sep=2.1pt}]
	\node at (1,2) {\small\bf $G(1\times 2 + 1\times 0)$};	
	\node at (0,0) [choice] (00) {};
	\node at (2,0) [choice] (20) {};
	\node at (0,1) [choice] (01) {};	
	\node at (2,1) [choice] (21) {};
	\node at (2,-1) {};
	\draw[thick] (01) to (21);
	\draw[thick] (01) to (20);	
\end{tikzpicture}
\quad
\begin{tikzpicture}[scale=0.5,choice/.style={draw, circle, fill=black!100, inner sep=2.1pt}]
	\node at (1,3) {\small\bf $G(2\times 1 + \overline{1\times 2})$};	
	\node at (0,0) [choice] (00) {};
	\node at (2,0) [choice] (20) {};
	\node at (0,1) [choice] (01) {};	
	\node at (2,1) [choice] (21) {};
	\node at (0,2) [choice] (02) {};	
	\node at (2,2) [choice] (22) {};
	\draw[thick] (02) to (22);	
	\draw[thick] (01) to (22);
\end{tikzpicture}
\quad
\begin{tikzpicture}[scale=0.5,choice/.style={draw, circle, fill=black!100, inner sep=2.1pt}]
	\node at (1,3) {\small\bf $G(1\times 1 + 2\times 2)$};	
	\node at (0,0) [choice] (00) {};
	\node at (2,0) [choice] (20) {};
	\node at (0,1) [choice] (01) {};	
	\node at (2,1) [choice] (21) {};
	\node at (0,2) [choice] (02) {};	
	\node at (2,2) [choice] (22) {};
	\draw[thick] (01) to (21);
	\draw[thick] (01) to (20);
	\draw[thick] (00) to (20);	
	\draw[thick] (00) to (21);
	\draw[thick] (02) to (22);	
\end{tikzpicture}
\quad
\begin{tikzpicture}[scale=0.5,choice/.style={draw, circle, fill=black!100, inner sep=2.1pt}]
	\node at (1,3) {\small\bf $G(Z_2 + \overline{1\times 1})$};	
	\node at (0,0) [choice] (00) {};
	\node at (2,0) [choice] (20) {};
	\node at (0,1) [choice] (01) {};	
	\node at (2,1) [choice] (21) {};
	\node at (0,2) [choice] (02) {};	
	\node at (2,2) [choice] (22) {};
	\draw[thick] (01) to (21);
	\draw[thick] (01) to (22);
	\draw[thick] (02) to (22);	
\end{tikzpicture}
\end{center}

\vspace{-1mm}

\item Let $m\in\mathbb{N}$. Then  $G(mA):=G(A + \cdots + A)$ ($m$ times). Examples: 

\vspace{-2mm}

\begin{center}
\begin{tikzpicture}[scale=0.5,choice/.style={draw, circle, fill=black!100, inner sep=2.1pt}]
	\node at (1.5,3) {\small\bf $G(3(1\times 1\times 1))$};	
	\node at (0,0) [choice] (1c) {};
	\node at (1.5,0) [choice] (2c) {};
	\node at (3,0) [choice] (3c) {};
	\node at (0,1) [choice] (1b) {};	
	\node at (1.5,1) [choice] (2b) {};
	\node at (3,1) [choice] (3b) {};
	\node at (0,2) [choice] (1a) {};	
	\node at (1.5,2) [choice] (2a) {};
	\node at (3,2) [choice] (3a) {};
	\node at (0,-0.5) (dummy) {};
	\draw[thick] (1a) to (2a);	
	\draw[thick] (2a) to (3a);
	\draw[thick] (1b) to (2b);	
	\draw[thick] (2b) to (3b);
	\draw[thick] (1c) to (2c);	
	\draw[thick] (2c) to (3c);
\end{tikzpicture}
\qquad\quad
\begin{tikzpicture}[scale=0.5,choice/.style={draw, circle, fill=black!100, inner sep=2.1pt}]
	\node at (1,4) {\small\bf $G(2(2\times 2))$};	
	\node at (0,0) [choice] (00) {};
	\node at (2,0) [choice] (20) {};
	\node at (0,1) [choice] (01) {};	
	\node at (2,1) [choice] (21) {};
	\node at (0,2) [choice] (02) {};	
	\node at (2,2) [choice] (22) {};
	\node at (0,3) [choice] (03) {};	
	\node at (2,3) [choice] (23) {};
	\draw[thick] (01) to (21);
	\draw[thick] (00) to (20);	
	\draw[thick] (00) to (21);
	\draw[thick] (01) to (20);	
	\draw[thick] (02) to (22);	
	\draw[thick] (03) to (23);
	\draw[thick] (03) to (22);
	\draw[thick] (02) to (23);
\end{tikzpicture}
\qquad\quad
\begin{tikzpicture}[scale=0.5,choice/.style={draw, circle, fill=black!100, inner sep=2.1pt}]
	\node at (1,4) {\small\bf $G(2\hspace{0,2mm}Z_2)$};	
	\node at (0,0) [choice] (00) {};
	\node at (2,0) [choice] (20) {};
	\node at (0,1) [choice] (01) {};	
	\node at (2,1) [choice] (21) {};
	\node at (0,2) [choice] (02) {};	
	\node at (2,2) [choice] (22) {};
	\node at (0,3) [choice] (03) {};	
	\node at (2,3) [choice] (23) {};
	\draw[thick] (01) to (21);
	\draw[thick] (00) to (20);	
	\draw[thick] (00) to (21);
	\draw[thick] (02) to (22);	
	\draw[thick] (03) to (23);
	\draw[thick] (02) to (23);
\end{tikzpicture}
\end{center}

\vspace{-1mm}

\item Recall our ``regrouping scenario'' (B) from the introduction. If there are $n$ people in the group and there are $m$ possible meeting spots in the city, then the game is of the form $G(m(1^n))$, where $1^n:=1\times \cdots \times 1$ ($n$ times).
\end{itemize}

\subsection{On choice domination and Nash equilibria in \WLC games} 
 
Given a \WLC game $G=(A,C_1,\dots,C_n,W_G)$, we say that a choice $c\in C_{i}$ is \defstyle{at least as good as} (respectively, \defstyle{better than}) a choice $c'\in C_{i}$ if $W^i_{G}(c') \subseteq W^i_{G}(c)$ (respectively, $W^i_{G}(c') \subsetneq W^i_{G}(c)$).  
A choice $c\in C_{i}$ is \defstyle{optimal} for a player $i$ if it is at least as good as any other choice of $i$.   

Note that a choice $c\in C_{i}$ is better than a choice $c'\in C_{i}$ precisely when $c$ \emph{weakly dominates} $c'$ in the usual game-theoretic sense (see e.g. \cite{Leyton2008}, 
\cite{Peters}).
Respectively, a choice $c\in C_i$ is an optimal choice of player $i$ when it is a \emph{weakly dominant} choice (i.e., a choice that weakly dominates all other choices).

Note that  $c$ \emph{strictly dominates} $c'$ (\textit{ibid.}) if and only if $c$ is a surely winning choice and $c'$ is a surely losing choice. 
%
%
Thus, strict domination between choices is a too strong concept in \WLC games. 
Also, the concept of \emph{Nash equilibrium} (NE) for choice profiles is not very useful here, because not only every winning choice profile is a NE, but so is also every losing choice profile  which no player can unilaterally convert to a winning one by changing their choice\footnote{In the special case of two-player games, this latter case amounts to choice profiles consisting only of (surely) losing choices.}. 
For instance, in the game displayed below, the Nash equilibria are not only all 4 winning profiles, $(a_{1},a_{2},a_{3})$, $(b_{1},b_{2},c_{3})$, 
$(c_{1},b_{2},b_{3})$, and $(c_{1},c_{2},c_{3})$, but also the losing profiles 
$(a_{1},c_{2},b_{3})$ and  $(b_{1},c_{2},a_{3})$. 

\vspace{-3mm}

\begin{center}
\begin{tikzpicture}[scale=1,choice/.style={draw, circle, fill=black!100, inner sep=2.1pt}]
	\node at (0,0) [choice] (00) {};
	\node at (1,0) [choice] (10) {};
	\node at (2,0) [choice] (20) {};
	\node at (0,1) [choice] (01) {};
	\node at (1,1) [choice] (11) {};
	\node at (2,1) [choice] (21) {};	
	\node at (0,2) [choice] (02) {};
	\node at (1,2) [choice] (12) {};
	\node at (2,2) [choice] (22) {};
	\draw[thick] ([yshift=-0.5mm]00.east) to ([yshift=-0.5mm]10.west);
%
%
	\draw[thick] ([yshift=0.5mm]01.east) to ([yshift=0.5mm]11.west);
	\draw[thick] ([yshift=-0.5mm]10.east) to ([yshift=-0.5mm]20.west);
	\draw[very thick,densely dashed] ([yshift=0.5mm]11.west) to ([yshift=0.5mm]21.west);
%
%
	\draw[thick] (02.east) to (12.west);
	\draw[thick] (12.west) to (22.west);	
	\draw[very thick,densely dashed] (00) to (11);
%
%
	\draw[thick] (11) to (20);
%
	\node at (0,2.3) {\small $a_1$};
	\node at (1,2.3) {\small $a_2$};
	\node at (2,2.3) {\small $a_3$};
	\node at (0,1.3) {\small $b_1$};
	\node at (1,1.3) {\small $b_2$};
	\node at (2,1.3) {\small $b_3$};
	\node at (0,-0.3) {\small $c_1$};
	\node at (1,-0.3) {\small $c_2$};
	\node at (2,-0.3) {\small $c_3$};
\end{tikzpicture}
\end{center}
We will not make use of Nash equilibria further in this work.

\subsection{Symmetries of \WLC games and structural protocols}

A \defstyle{protocol} is a mapping $\Sigma$ that
assigns to every pair $(G,i)$, where $G$ is a \WLC game and $i$ a player in $G$, a nonempty set $\Sigma(G,i)\subseteq C_i$ of choices.
%
Thus, a protocol describes a global nondeterministic strategy for playing any \WLC game in the role of any player. Intuitively, such protocol
represents a global mode of acting in
any situation that involves playing \WLC games. Hence, protocols can be
informally regarded as global ``reasoning styles" or ``behaviour modes". Thus, a protocol can also be identified with an agent who acts according to that protocol in all situations that involve playing different \WLC games in different player roles.
Note, however, that this correspondence is not bijective, as several agents may behave according to the same protocol and some protocols might not correspond to the behaviour of any (actual) agent.

Assuming a setting  with no special conventions or preplay communication, a protocol will only take into account the \emph{structural properties} of the game and its winning relation. Thus, it is reasonable to assume that the names of the choices and the names (or ordering) of the players should be of no relevance. In this section we make this issue precise.
\lorip{
(For more details and examples, see~\cite{LORI-VI-techrep}.)
}

\begin{definition}\label{def: choice renaming}
An isomorphism\footnote{Isomorphism is defined as usual for relational structures (see, e.g., \cite{ebbinghaus}).} 
between games $G$ and $G'$ is called a \defstyle{choice-renaming}. An automorphism of $G$ is called a \defstyle{choice-renaming of $G$}. 

Let $G=(A,C_1,\dots,C_n,W_G)$ be a \WLC game. For a player $i$, we say that the choices $c,c'\in C_i$ are \defstyle{$i$-equivalent}, denoted by $c\simeq_i c'$, if there is a choice-renaming of $G$ that maps $c$ to $c'$. 
For each $i\leq n$, the relation $\simeq_i$ is an equivalence relation on the set $C_i$. We denote the equivalence class of $c\in C_i$ by $\llbracket c\rrbracket_i$.
\end{definition}

\techrep{
Supposing that a player $i$ does not use names or labels of her choices (or she has no preferences over them), then she should be \emph{indifferent} about the choices that are in the same equivalence class.
}

\techrep{
\begin{example}
Let $A=\{a_1,b_1,c_1,a_2,b_2,c_2\}$ and $A=\{a_1',b_1',c_1',a_2',b_2',c_2'\}$. Consider the \WLC games $G$ and $G'$ whose game graphs are given below.
\begin{center}
\begin{tikzpicture}[scale=0.5,choice/.style={draw, circle, fill=black!100, inner sep=2.1pt}]
	\node at (-2,2) {\small $G:$};	
	\node at (0,0) [choice] (00) {};
	\node at (2,0) [choice] (20) {};
	\node at (0,1) [choice] (01) {};	
	\node at (2,1) [choice] (21) {};
	\node at (0,2) [choice] (02) {};	
	\node at (2,2) [choice] (22) {};
	\draw[thick] (00) to (20);	
	\draw[thick] (00) to (21);
	\draw[thick] (02) to (22);
	\draw[thick] (01) to (22);
	\draw[thick] (02) to (21);
	\draw[thick] (01) to (20);
	\node at (-0.7,2) {\small $a_1$};
	\node at (-0.7,1) {\small $b_1$};
	\node at (-0.7,0) {\small $c_1$};
	\node at (2.7,2) {\small $a_2$};
	\node at (2.7,1) {\small $b_2$};
	\node at (2.7,0) {\small $c_2$};
\end{tikzpicture}
\quad\quad
\begin{tikzpicture}[scale=0.5,choice/.style={draw, circle, fill=black!100, inner sep=2.1pt}]
	\node at (-2,2) {\small $G':$};	
	\node at (0,0) [choice] (00) {};
	\node at (2,0) [choice] (20) {};
	\node at (0,1) [choice] (01) {};	
	\node at (2,1) [choice] (21) {};
	\node at (0,2) [choice] (02) {};	
	\node at (2,2) [choice] (22) {};
	\draw[thick] (01) to (21);
	\draw[thick] (00) to (20);	
	\draw[thick] (00) to (21);
	\draw[thick] (02) to (22);
	\draw[thick] (01) to (22);
	\draw[thick] (02) to (20);
	\node at (-0.7,2) {\small $a_1'$};
	\node at (-0.7,1) {\small $b_1'$};
	\node at (-0.7,0) {\small $c_1'$};
	\node at (2.7,2) {\small $a_2'$};
	\node at (2.7,1) {\small $b_2'$};
	\node at (2.7,0) {\small $c_2'$};
\end{tikzpicture}
\end{center}
A function $\pi:A\rightarrow A'$, which maps $b_2$ to $c_2'$, $c_2$ to $b_2'$, and $x$ to $x'$ for all the other choices $x\in A$, is choice-renaming from $G$ to $G'$.
Note that actually both $G$ and $G'$ are of the form $G(O_3)$.
A function that maps $a_i$ to $b_i$, $b_i$ to $c_i$, and $c_i$ to $a_i$ (for $i\in\{1,2\}$) is a choice-renaming of $G$. Therefore $a_1\simeq_1 b_1\simeq_1 c_1$ and $a_2\simeq_2 b_2\simeq_2 c_2$.
\end{example}
}

\begin{definition}\label{def: full renaming}
Consider $n$-player \WLC games $$G=(A,C_1,\dots,C_n,W_G)
\text{ and }G'=(A,C_1',\dots,C_n',W_G').$$ A permutation $\beta:\{1,...,n\}\rightarrow\{1,...,n\}$ is
called a \defstyle{player-renaming}
between $G$ and $G'$ if the following conditions hold:
\begin{enumerate}
\item[(1)]
$C_{\beta(i)} = C_{i}'$ for each $i\leq n$.\vspace{0.7mm}
\item[(2)]
$W_G' = \{\, (c_{\beta(1)},\dots,c_{\beta(n)})\, |\, (c_{1},\dots,c_{n})\in W_G\, \}$.
\end{enumerate}
\end{definition}

If there is a player-renaming between two \WLC games, the games are essentially the same, the only difference being the ordering of the players.
\techrep{Furthermore, the game graph of $G'$ is simply obtained by permuting the columns of the game graph of $G$.
}

\techrep{
\begin{example}
Consider the following \WLC games:
\begin{center}
\begin{tikzpicture}[scale=0.5,choice/.style={draw, circle, fill=black!100, inner sep=2.1pt}]
	\node at (-2,2) {\small\bf $G$:};	
	\node at (0,0) [choice] (00) {};
	\node at (0,1.5) [choice] (01) {};
	\node at (2,0) [choice] (20) {};
	\node at (2,1) [choice] (21) {};
	\node at (2,2) [choice] (22) {};
	\draw[thick] (00) to (20);	
	\draw[thick] (01) to (20);	
	\draw[thick] (01) to (21);
	\draw[thick] (01) to (22);
	\node at (-0.7,1.5) {\small $a_1$};
	\node at (-0.7,0) {\small $b_1$};
	\node at (2.7,2) {\small $a_2$};
	\node at (2.7,1) {\small $b_2$};
	\node at (2.7,0) {\small $c_2$};
\end{tikzpicture}
\qquad
\begin{tikzpicture}[scale=0.5,choice/.style={draw, circle, fill=black!100, inner sep=2.1pt}]
	\node at (-2,2) {\small\bf $G'$:};	
	\node at (2,0) [choice] (00) {};
	\node at (2,1.5) [choice] (01) {};
	\node at (0,0) [choice] (20) {};
	\node at (0,1) [choice] (21) {};
	\node at (0,2) [choice] (22) {};
	\draw[thick] (00) to (20);	
	\draw[thick] (01) to (20);	
	\draw[thick] (01) to (21);
	\draw[thick] (01) to (22);
	\node at (2.7,1.5) {\small $a_1$};
	\node at (2.7,0) {\small $b_1$};
	\node at (-0.7,2) {\small $a_2$};
	\node at (-0.7,1) {\small $b_2$};
	\node at (-0.7,0) {\small $c_2$};
\end{tikzpicture}
\end{center}
A permutation $\beta$, which swaps 1 and 2, is a player-renaming between $G$ and $G'$.
\end{example}
}

\begin{definition}\label{deftv}
Consider \WLC games $G$ and $G'$. A pair $(\beta,\pi)$ is a \defstyle{full renaming} between $G$ and $G'$ if there is a \WLC game $G''$ such that $\beta$ is a player-renaming between $G$ and $G''$ and $\pi$ is a choice-renaming between $G''$ and $G'$. 
If $G$ and $G'$ have the same domain $A$, we say that $(\beta,\pi)$ is a \defstyle{full renaming of $G$}.
We say that choices $c\in C_i$ and $c'\in C_j$ in the same game are \defstyle{structurally equivalent}, denoted by $c\sim c'$, if there is a full renaming $(\beta,\pi)$ of $G$ such that $\beta(i)=j$ and $\pi(c)=c'$.
It is quite easy to see that $\sim$ is an equivalence relation on the set $A$ of all choices. We denote the equivalence class of a choice $c$ by $[c]$.
\end{definition}

\techrep{
We also make the following observations: 
\begin{itemize}[leftmargin=*]
\item If $c\simeq_i c'$ for some $i$, then also $c\sim c'$.
\item Suppose that there is a sequence $G_1,\dots,G_n$ of \WLC games such that for every $i$ there is either a choice-renaming or a player-renaming between $G_i$ and $G_{i+1}$. Then it is easy to show that there is a full renaming from $G_1$ to $G_n$.
\end{itemize}
}

\begin{example}
Consider a \WLC game of the form $G(1\times 2 + 2\times 1)$:


\begin{center}
\begin{tikzpicture}[scale=0.5,choice/.style={draw, circle, fill=black!100, inner sep=2.1pt}]
	\node at (-4.5,2) {};	
	\node at (0,0) [choice] (00) {};
	\node at (2,0) [choice] (20) {};
	\node at (0,1) [choice] (01) {};	
	\node at (2,1) [choice] (21) {};
	\node at (0,2) [choice] (02) {};	
	\node at (2,2) [choice] (22) {};
	\draw[thick] (00) to (20);	
	\draw[thick] (02) to (22);
	\draw[thick] (02) to (21);
	\draw[thick] (01) to (20);
	\node at (-0.7,2) {\small $a_1$};
	\node at (-0.7,1) {\small $b_1$};
	\node at (-0.7,0) {\small $c_1$};
	\node at (2.7,2) {\small $a_2$};
	\node at (2.7,1) {\small $b_2$};
	\node at (2.7,0) {\small $c_2$};
\end{tikzpicture}
\end{center}
\techrep{
Let $\beta$ be the permutation which swaps (players) 1 and 2, and let $\pi$ be the bijection
$$\{(a_1,c_2),(b_1,b_2),(c_1,a_2),(a_2,c_1),(b_2,b_1),(c_2,a_1)\}.$$ Now the pair $(\beta,\pi)$ is a full renaming of $G(1\times 2 + 2\times 1).$
}
It is easy to see that $\simeq_1$ has the equivalence classes $\{a_1\}$ and $\{b_1,c_1\}$, and similarly, $\simeq_2$ has equivalence classes $\{c_2\}$ and $\{a_2,b_2\}$.
Furthermore, $\sim$ has the equivalence classes $\{a_1,c_2\}$ and $\{b_1,c_1,a_2,b_2\}$. Likewise, in the game $G^*$ from Example~\ref{ex: 3-player game} the relation $\sim$ has equivalence classes $\{a_1,b_3\}$, $\{b_1,a_3\}$, $\{a_2,b_2\}$.

\end{example}

We say that a protocol $\Sigma$ is \defstyle{structural} if it is ``indifferent'' with respect to full renamings, which means that, given any \WLC games $G$, $G'$ for which there exists a full renaming $(\beta,\pi)$ between $G$ and $G'$, for any $i$ and any choice $c\in C_i$, it must hold that
$$c\in\Sigma(G,i) \, \text{ iff } \, \pi(c)\in\Sigma(G',\beta(i)).$$
Intuitively, this reflects the idea that, when following a structural protocol one
acts independently of the names of choices and names (or ordering) of player roles.\footnote{In the definition of \WLC games, player roles appear as
(naturally ordered) indices $i$. However, this presentation is only for technical convenience, and player roles could instead be called, e.g., ``white'', ``black'' etc.}
Thus, following a structural protocol, one cannot tell the difference between choices that are structurally equivalent.
Hereafter, unless otherwise specified, we only consider structural protocols.

It is worth noting that if we considered a framework where \WLC games were presented so that the names of the choices and players could always be used to
uniquely define an ordering (of the players and their choices),
solving games could be
trivialised by using the pre-negotiated agreement to always choose the lexicographically least tuple from the winning relation. For more on solving coordination games with names or ordering of choices, see~\cite{DBLP:conf/eumas/GorankoKR17}.

\section{Purely rational principles in \WLC games}\label{sectionttt}

In this section we will analyse various principles which players can follow in \WLC games. We will provide justifications for these principles and study which games can be surely won when such principles are followed. Our aim is to characterize which principles are ``purely rational'' in the sense that all ideally rational agents ought to follow them in all \WLC games.

\subsection{Purely rational principles}

By a \defstyle{principle} we mean here \emph{any nonempty class of protocols}. 
Intuitively, these are the protocols ``complying" with that principle.
Hence principles are simply \emph{properties of protocols} and if protocols are regarded as ``reasoning styles", then principles are properties of reasoning. 
Likewise, if protocols are identified with agents who behave according to them, then principles can be seen as norms which agents follow.

Principles that contain only structural protocols are called \defstyle{structural principles}. 
Such principles are properties of structural protocols and therefore they describe behavior which is independent of names of the choices and roles of the players.

We are mainly interested in principles which describe ``purely rational and perfect reasoning'' that involves neither preplay communication nor conventions. The principles corresponding to such reasoning are defined as follows:
\begin{quote}
A principle P is called \emph{purely rational} if P is followed by \emph{all ideally rational agents} in \emph{every} \WLC game.
\end{quote}
An ideally rational agent always reasons in an optimal, faultless way. Note that we \emph{do not give a formal definition} of an ideally rational rational agent. It is taken to be a philosophical primitive that is central to our study, but yet a notion that we \emph{cannot} define formally in this paper\footnote{The reader should note that, because the notion of an ideally rational agent is taken as a conceptual primitive, it is problematic to formally prove that any of the principles defined in the paper are actually purely rational. Still, we try to give strong rational justifications for these principles and argue that many of them should be followed by all ideally rational agents. However, when we proceed to stronger principles, it becomes harder to give the principles a solid rational justification and to argue that they would be purely rational. One of our main philosophical aims in the paper is to demonstrate that it is very difficult to identify the boundary of purely rational principles and other principles.}. 
Note also that, since purely rational principles are indeed followed by all ideally rational agents, such principles are not based on any particular conventions.

We say that a player $i$ \defstyle{follows a principle P} in a \WLC game $G$ if she plays according to \emph{some}  protocol in P. 
Consider for example the following principles:
\vspace{-0,1cm}
\begin{align*}
	&\mathrm{P}_1:=\{\Sigma\mid \Sigma(G,i) \text{ does not contain any surely losing choices when } W_G\neq\emptyset\},\\[-0,05cm]
	&\mathrm{P}_2:=\{\Sigma\mid \Sigma(G,i)\text{ contains all choices } c\in C_i
              \text{ such that }          |W^i_G(c)| \\[-0,1cm]
         &\hspace{1.9cm} \text{ is a prime number; if there are no such choices, }
              \Sigma(G,i) = C_i.\}. \\[-0,7cm]
\end{align*}
If player $i$ follows $\mathrm{P}_1$, then she uses some protocol which never selects surely losing choices, if possible. This seems a principle that any rational agent would follow, so it can be regarded as a purely rational principle.  
Likewise, if player $i$ follows $\mathrm{P}_2$, then she always plays choices whose degree (in the game graph) is a prime number, if possible. This principle seems arbitrary; it could only be some artificial convention, for example.

We say that a \defstyle{principle P solves} a \WLC game $G$ (or \defstyle{$G$ is P-solvable}), if $G$ is won whenever every player follows some protocol that belongs to P. Formally, this means that $\Sigma_1(G,1)\times \cdots\times\Sigma_n(G,n)\subseteq W_G$ for all protocols 
$\Sigma_1,\dots,\Sigma_n\in\mathrm{P}$. 
The class of all P-solvable games is denoted by $s(\mathrm{P})$. 

Hereafter, we will identify (a hierarchy of) principles that can be considered to be purely rational and will analyse the classes of games that they solve.
Since we have argued that purely rational principles should be structural, for every principle P which we define we assume that P only consists of \emph{structural} protocols which satisfy the description of $\mathrm{P}$.

\begin{remark}\label{rem: Intensions}
When defining principles, we give their definition informally by describing some rational mode of behaviour. For example, the non-losing principle $\mathrm{NL}$ in the following section is defined by the sentence ``do not play a losing choice, if possible''. We call this description the \defstyle{intension} of $\mathrm{NL}$, while the \defstyle{extension} of $\mathrm{NL}$ is the actual set of structural protocols that satisfy this intension.
In this paper we describe the intensions in natural language but they could be formalized in e.g. some formal logic.

Note that there might be several different intensions that correspond to the same extension. Also, there may be a principle $\mathrm{P}$ (consisting of arbitrary protocols) for which there is no natural intension. A similar distinction can also be made for protocols: an agent behaving according to his protocol could have some intension that justifies his behaviour, but we may not be able to identify this intension by only looking at the (extension of the) protocol. 
This also bears a link to the concept of \emph{rationalizable strategies} 
\cite{Bernheim84}, \cite{Pearce84}. 
\end{remark}

\begin{remark}
We have defined principles as sets of protocols for mainly conceptual reasons to make a distinction between protocols and their properties. However, for the results of the current paper, we could have defined principles as protocols simply by forming a corresponding ``union protocol''. That is, given a principle $\mathrm{P}$, we can form the protocol $\Sigma_\mathrm{P}$ by defining $\Sigma_\mathrm{P}(G,i)=\bigcup_{\Sigma\in\mathrm{P}}\Sigma(G,i)$, whence we have $s(\mathrm{P})=s(\{\Sigma_\mathrm{P}\})$. 
For the principles defined in this paper, we also happen to have $\Sigma_\mathrm{P}\in\mathrm{P}$, but generally this does not need to be the case. Moreover, this correspondence is not bijective since we may have $\Sigma_\mathrm{P_1}=\Sigma_\mathrm{P_2}$ for some protocols $\mathrm{P_1}\neq\mathrm{P_2}$.
With the current definitions, we can also combine principles in an easy way by simply taking their intersections---that is, a player follows both $\mathrm{P_1}$ and $\mathrm{P_2}$ if and only if (s)he follows the principle $\mathrm{P_1}\cap\mathrm{P_2}$.
\end{remark}

\subsection{Basic individual rationality}\label{ssec: BIR}

Hereafter we describe principles by the properties of protocols that they determine.
We begin by considering the case 
where players are individually rational, but there is no common knowledge
about this being the case. 
It is safe to assume that any individually rational player would follow at least the following principle.

\medskip
\textbf{Fundamental individual rationality (FIR):}

\textsf{Never play a strictly dominated choice\footnote{Recall, that a choice $a$ is strictly dominated by a choice $b$ if the choice $b$ guarantees a strictly higher payoff than the choice $a$ in every play of the game (see e.g. 
\cite{Leyton2008}, \cite{Peters}).}, if possible.} 

\medskip

As noted earlier, strict domination is a very strong concept for \WLC games. Following FIR simply means that a player should never prefer a surely losing choice to a  surely winning one. 
Therefore FIR is a very weak principle that can solve only some quite trivial types of games such as $G(1\times 2 + 1\times 0)$\techrep{\,(See Figure 1)}.
In general, FIR-solvable games have a simple description: at least one of the players has (at least one) surely winning choice, and all non-winning choices of that player are surely losing. \techrep{Thus, for example all games of the form $G(k\times l + m\times 0)$ are FIR-solvable.}

FIR has two natural strengthenings that can still be considered purely rational: 

\begin{enumerate}
\item \defstyle{Non-losing principle (NL):} 
\textsf{Never play a losing choice, if possible.}

\smallskip

\item \defstyle{Sure winning principle (SW):}
\textsf{Always play a winning choice, if possible.} 
\end{enumerate}
\vspace{-1mm}

Since losing choices cannot be winning choices, these principles can naturally be put together by taking their \emph{intersection} (recall here that principles are simply sets of protocols that satisfy the given property).

\medskip
\textbf{Basic individual rationality (BIR):} \; NL $\cap$ SW.
\medskip
 
\noindent
Thus, when following BIR, a player plays a winning choice if she has one, else she plays a non-losing choice. 
Let us make some observations
(see the pictures in Figure 1):  
\lorip{(For a more detailed justification of these
claims, see the technical report \cite{LORI-VI-techrep}.)}
\begin{enumerate}[leftmargin=*]
\item  NL and SW do not imply each other and neither of them follows from FIR. This can be seen by the following examples.
\smallskip
\begin{itemize}[leftmargin=*]
\item The game $G(1\times 1 + \overline{1\times 1})$ is NL-solvable but not SW-solvable. This is because neither of the players has a winning choice, but if each of them chooses their non-losing move they win.  
%
\item The game $G(Z_2)$ is SW-solvable but not NL-solvable. 
\techrep{
This is because both players have a winning choice, but there are no losing choices.
Note that in this game both players can force winning and thus they both would be sure of winning even without knowing that the other player follows SW.
}
\end{itemize}
\smallskip
\item FIR-solvable games are solvable by \emph{both} SW and NL.
\techrep{
This is because in FIR-solvable games, at least one player $i$ has a winning choice and all the other choices of that player are losing. Hence by following either SW or NL, the player $i$ will select a winning choice. 
}
\smallskip
\item Every BIR-solvable game is \emph{either} NL or SW-solvable.
\techrep{
This is because a BIR-solvable game $G$ is won when every player selects a winning choice, if they have one, or else if they each play a non-losing choice. If at least one player has a winning choice in $G$, then it is SW-solvable, else it is NL-solvable. 
}
\end{enumerate}
\techrep{
\begin{figure}[h]
\begin{center}
\begin{tikzpicture}[scale=0.45,choice/.style={draw, circle, fill=black!100, inner sep=2.1pt}]
	\node at (1,2) {\small\bf $G(1\times 2 + 1\times 0)$};	
	\node at (0,0) [choice] (00) {};
	\node at (2,0) [choice] (20) {};
	\node at (0,1) [choice] (01) {};	
	\node at (2,1) [choice] (21) {};
	\draw[thick] (01) to (21);
	\draw[thick] (01) to (20);	
\end{tikzpicture}
\qquad
\begin{tikzpicture}[scale=0.45,choice/.style={draw, circle, fill=black!100, inner sep=2.1pt}]
	\node at (1,2) {\small\bf $G(1\times 1 + \overline{1\times 1})$};	
	\node at (0,0) [choice] (00) {};
	\node at (0,1) [choice] (01) {};
	\node at (2,0) [choice] (20) {};
	\node at (2,1) [choice] (21) {};
	\draw[thick] (01) to (21);
\end{tikzpicture}
\qquad\quad
\begin{tikzpicture}[scale=0.45,choice/.style={draw, circle, fill=black!100, inner sep=2.1pt}]
	\node at (1,2) {\small\bf $G(Z_2)$};	
	\node at (0,0) [choice] (00) {};
	\node at (2,0) [choice] (20) {};
	\node at (0,1) [choice] (01) {};	
	\node at (2,1) [choice] (21) {};
	\draw[thick] (01) to (21);
	\draw[thick] (00) to (20);	
	\draw[thick] (00) to (21);
\end{tikzpicture}
\label{fig1}
\caption{Some BIR-solvable games}
\smallskip
\end{center}
\end{figure}
}
Therefore, the sets of games solvable by FIR, NL, SW, BIR form the following lattice:

\begin{center}
\begin{tikzpicture}
	[scale=0.8,place/.style={text width=4.7cm, align=flush center, rounded corners, font=\small,
	rectangle,draw=black!100,fill=black!0,thick,inner sep=2.5pt,minimum size=4mm}]
	\node at (0,0) [place] (FIR) {$s(\text{FIR})=s(\text{NL})\cap s(\text{SW})$};
	\node at (5,1) [place, text width=1cm] (SW) {$s(\text{SW})$};
	\node at (-5,1) [place, text width=1cm] (NL) {$s(\text{NL})$};	
	\node at (0,2) [place] (BIR) {$s(\text{BIR})=s(\text{NL})\cup s(\text{SW})$};
	\draw[thick] (FIR.east) to node [auto, sloped, below] {\small $\subsetneq$} (SW);
	\draw[thick] (NL) to node [auto, sloped, below] {\small $\supsetneq$} (FIR.west);
	\draw[thick] (NL) to node [auto, sloped, above] {\small $\subsetneq$} (BIR.west);
	\draw[thick] (BIR.east) to node [auto, sloped, above] {\small $\supsetneq$} (SW);
\end{tikzpicture}
\end{center}


\noindent
SW-solvable and NL-solvable games have simple descriptions. 
In SW-solvable games, at least one player has a surely winning choice. In NL-solvable games, the winning relation forms a nonempty \emph{Cartesian product} between all non-losing choices. BIR-solvable games have (at least) one of these two properties.

\techrep{
Note that in order to follow BIR, the players do not have to make any assumptions on  the behavior or rationality  of each other. In fact, the players do not even need to know that everyone has a mutual goal in the game; that is, following BIR would be equally rational even in coordination games that are \emph{not cooperative}.
}

\subsection{Common beliefs in rationality and iterated reasoning}\label{sec: Common beliefs in rationality}

In contrast to individual rationality, the collective rationality allows players to make assumptions on each other's rationality. Let P be a (purely rational) principle. When \emph{all players believe that everyone follows} P,  
they can reason as follows:
\begin{enumerate}[leftmargin=7mm]
\item[($\star$)] Suppose that by following P each player $i$ must play a choice from $A_i\subseteq C_i$ (i.e., $A_i$ is the smallest set such that $\Sigma(G,i)\subseteq A_i$ for every $\Sigma\in\mathrm{P}$). By this assumption, the players may collectively assume that the actually played game\footnote{In fact, the actually played game may be a proper subgame of $G'$ as the players may also follow other principles.} is $G':=G\upharpoonright (A_1,\dots, A_n)$, and therefore all P-compliant protocols should only prescribe choices in $G'$.  
\end{enumerate}
If players have \emph{common belief} in P being followed, then
the reasoning $(\star)$ above can be repeated for the game $G'$ and this iteration can be continued until a fixed point is reached. 
By $\cir(\mathrm{P})$ we denote the principle of \defstyle{collective iterated reasoning of P}
which prescribes that P is followed 
in the reduced game obtained by the iterated reasoning of ($\star$). 
\lorip{Since every iteration of $(\star)$ only reduces the players' sets of acceptable choices (yet, keeps them nonempty), it is easy to see that $s(\mathrm{P})\subseteq  s(\cir(\mathrm{P}))$ for any principle P (see \cite{LORI-VI-techrep} for more details.)
}
\techrep{Note that after every iteration of ($\star$), the sets of choices for each player become smaller (or remain the same). And since each protocol in any principle P must give nonempty set of choices for any \WLC game, $\cir(\mathrm{P})$ cannot make the set of choices empty for any player. From these observations it is easy to see that $s(\mathrm{P})\subseteq  s(\cir(\mathrm{P}))$ for any principle P.
}

When considering principles of \emph{collective} rationality further, we will apply collective iterated reasoning of the type described above. It may be argued whether such reasoning counts as purely rational, so a question arises: if P is a purely rational principle, is 
$\cir(\mathrm{P})$ 
always purely rational as well? We will not discuss this issue here. We note, however, the extensive literature relating common beliefs and knowledge with individual and collective rationality, see e.g., \cite{FudenbergTirole91}, \cite{Lewis69},  \cite{OR}, \cite{Syverson2002}. See also the following remark on alternative approaches on common belief on a principle P.
%

\begin{remark}\label{rem: iteration order}
It is important to note that when applying $(\star)$, we \emph{simultaneously} remove all the choices (of all players) that are not admissible by $\mathrm{P}$,  and thus the result of the repeated elimination process is unique. However, if we instead eliminated choices of one player at a time, the result of the iterated elimination process could depend on the order in which the players are considered. As we will show in Remark~\ref{rem: eliminating weakly dominated choices}, certain iteration orders may lead to non-rational results. 
Hence one could argue that simultaneous elimination of choices is indeed the most rational way of reasoning when having common belief in a principle~$\mathrm{P}$.
\end{remark}

Related to collective rationality, consider a situation where player $A$ follows certain principles, say $P_1$ and $P_2$, but (s)he is not sure which principles the other players follow. It is now conceivable that $A$ has a strong reason to believe that all the other players follow $P_1$, but $A$ is not sure if these other players also follow $P_2$. 
This could be because $P_1$ seems rationally obvious while $P_2$ is only well justified but not followed by all rational players. Alternatively, $P_2$ could be a more complex principle and $A$ could be skeptical whether the other players are smart enough to follow it.
Consider a scenario (for example a specific class of games) where all players reason in the same way so that they have common belief only in $P_1$ but they in fact all follow both $P_1$ and $P_2$ (but no other principles). The games that the players can now solve correspond to games solvable with the principle $\cir(P_1)\cap P_2$.

\subsection{Basic collective rationality}

Here we extend individually rational principles of Section~\ref{ssec: BIR} by adding common belief in the principles (as described in Section~\ref{sec: Common beliefs in rationality}) to the picture.
We first analyse what happens with the principles NL and SW.
It is easy to see that the collective iterated reasoning of NL reaches a fixed point in a single step by simply removing the losing choices of every player. Hence $s(\mathrm{NL})= s(\cir(\mathrm{NL}))$.
%
Collective iterated reasoning with SW also reaches a fixed point in a single step by eliminating all non-winning choices of every player who has a winning choice. But if even one player has a winning choice, then the game is already SW-solvable. Therefore $s(\mathrm{SW}) = s(\cir(\mathrm{SW}))$.

However, even though common belief in NL or SW does not make them stronger by solving more games, there is a difference on the \emph{epistemic level}. For example, the game $G(1\times 1+\overline{1\times 1})$ is solvable with NL even without common belief in NL. But if both players believe that the other player will follow NL, then they will not only win the game, but they also \emph{believe that that game will be won} before it is played.

Assuming common belief in BIR, some games
which are not BIR-solvable may become solvable. See the following example. 
\begin{example}\label{ex: BCR vs. BIR}
The game $G(Z_2+\overline{1\times 1})$ cannot be solved with $\mathrm{NL}$ or $\mathrm{SW}$. However, if the players can assume that neither of them selects a losing choice (by $\mathrm{NL}$) and eliminate those choices from the game, then they (both) have a winning choice in the reduced game and can win in it by $\mathrm{SW}$.
\techrep{
\begin{center}
\begin{tikzpicture}[scale=0.5,choice/.style={draw, circle, fill=black!100, inner sep=2.1pt}]
	\node at (-3.2,2) {\small\bf $G(Z_2+\overline{1\times 1})$: \ \ \ \ };	
	\node at (0,0) [choice] (00) {};
	\node at (2,0) [choice] (20) {};
	\node at (0,1) [choice] (01) {};	
	\node at (2,1) [choice] (21) {};
	\node at (0,2) [choice] (02) {};	
	\node at (2,2) [choice] (22) {};
	\draw[thick] (01) to (21);
	\draw[thick] (02) to (22);
	\draw[thick] (01) to (22);
\end{tikzpicture}
\end{center}
}
\end{example}
Thus, we define the following principle:

\medskip

\textbf{Basic collective rationality (BCR)}:  \; $\cir(\mathrm{BIR})$.

\medskip

\noindent
The example above shows that $s(\mathrm{BIR})\subsetneq s(\mathrm{BCR})$, i.e., BCR is \emph{stronger} than BIR. 
The games solvable by BCR have the following characterisation: 
\textit{after removing all surely losing choices of every player, at least one of the players has a surely winning choice.}

It is worth noting that common belief in SW is not needed for solving games with BCR because a \emph{single} iteration of $\cir(\mathrm{NL})$ suffices. Thus, players could solve BCR-solvable games simply by 
believing everyone to follow NL, i.e., eliminating all losing moves, and then following SW.
By this observation, we have $s(\mathrm{BCR})=s(\cir(\mathrm{NL})+\mathrm{SW})$.
We also point out that the principle BCR is equivalent to the
principle applied in \cite{hawke2017logic} for Strategic Coordination Logic.

\subsection{Principles using optimal choices} 

If a rational player has optimal choices (i.e., at least as good as all other choices), it is natural to assume that she selects such a choice.
\techrep{Note that players may have several optimal choices, or none at all. For example, in the game $G(2\times 2)$ both players have two optimal choices while in $G(Z_3)$ neither of the players has optimal choices.
We now introduce the following principle:
}

\medskip

\textbf{Individual optimal choices (IOC):} 
\textsf{Play an optimal choice, if possible.}


\begin{example}
\label{ex:3.2}
Recall the \WLC game $G^*$ from Example~\ref{ex: 3-player game}. For Casper (who is carrying the crowbar) it is a better choice to go to the front door than to the basement. Likewise, for Jonathan (who is carrying the lantern) it is a better choice to go to the basement than to the front door. Therefore the choice $a_1$ is (the only) optimal choice for player 1 and $b_3$ is (the only) optimal choice for the player 3. 
The player 2 (Jesper) does not have any optimal choices, but if both 1 and 3 play their optimal choices, then the game is won regardless of the choice of 2. Therefore, the game $G^*$ is solvable with $\mathrm{IOC}$. But, since no player has winning or losing choices in this game, it is easy to see that it is not $\mathrm{BCR}$-solvable.
\end{example}

\lorip{
By the description of BIR-solvable games, it is easy to see that they are IOC-solvable. 
We will show that IOC is \emph{incomparable} with BCR (in terms of their sets of solvable games).
As explained above, the game $G^*$ is IOC-solvable but not BCR-solvable. Furthermore, the following BCR-solvable game $G_\Sigma$ is not IOC-solvable since player 1 does not have any optimal choices and so might end up playing a losing choice.

\vspace{-3mm}

\begin{center}
\begin{tikzpicture}[scale=0.45,choice/.style={draw, circle, fill=black!100, inner sep=2.1pt}]
	\node at (-2.5,2)  () {$G_{\Sigma}$:};
	\node at (0,0) [choice] (00) {};
	\node at (2,0) [choice] (20) {};
	\node at (0,1) [choice] (01) {};
	\node at (2,1) [choice] (21) {};
	\node at (0,2) [choice] (02) {};
	\node at (2,2) [choice] (22) {};	
	\draw[thick] (02) to (22);
	\draw[thick] (02) to (21);	
	\draw[thick] (00) to (20);
	\draw[thick] (00) to (21);	
	\node at (-0.7,0) {\small\bf $c_1$};
	\node at (-0.7,1) {\small\bf $b_1$};
	\node at (-0.7,2) {\small\bf $a_1$};
	\node at (2.7,0) {\small\bf $c_2$};
	\node at (2.7,1) {\small\bf $b_2$};
	\node at (2.7,2) {\small\bf $a_2$};		
\end{tikzpicture}
\end{center}
}

\vspace{-1mm}

\techrep{
Note that if a player has winning choices, then the set of optimal choices is the set of winning choices, and therefore $\mathrm{IOC}\subseteq\mathrm{SW}$. From the description of BIR-solvable games, we see that they are also IOC-solvable.

The next example, together with Example \ref{ex:3.2}, shows that IOC is \emph{incomparable} with BCR with respect to the classes of games that are solvable by these principles.
\begin{example}\label{ex: BCR vs. IOC}

Consider the following \WLC game $G_{\Sigma}$.
\begin{center}
\begin{tikzpicture}[scale=0.5,choice/.style={draw, circle, fill=black!100, inner sep=2.1pt}]
	\node at (-2.5,2)  () {$G_{\Sigma}$:};
	\node at (0,0) [choice] (00) {};
	\node at (2,0) [choice] (20) {};
	\node at (0,1) [choice] (01) {};
	\node at (2,1) [choice] (21) {};
	\node at (0,2) [choice] (02) {};
	\node at (2,2) [choice] (22) {};	
	\draw[thick] (02) to (22);
	\draw[thick] (02) to (21);	
	\draw[thick] (00) to (20);
	\draw[thick] (00) to (21);
	\node at (-0.7,0) {\small\bf $c_1$};
	\node at (-0.7,1) {\small\bf $b_1$};
	\node at (-0.7,2) {\small\bf $a_1$};
	\node at (2.7,0) {\small\bf $c_2$};
	\node at (2.7,1) {\small\bf $b_2$};
	\node at (2.7,2) {\small\bf $a_2$};		
\end{tikzpicture}
\end{center}
By following $\mathrm{BCR}$, player 1 chooses either $a_1$ or $c_1$ and player 2 chooses $b_2$, whence the game is won. However, $G_\Sigma$ is not solvable with $\mathrm{IOC}$ since player 1 does not have any optimal choices (and may thus end up choosing the losing choice $b_1$).
\end{example}
}

\lorip{
\vspace{-1mm}

\noindent
In order to avoid pathological cases like this we can add NL to IOC.
}

\techrep{
As we saw earlier, if a player does not have optimal choices, following only IOC might lead to playing a losing choice. 
In order to avoid pathological cases like this, we should at least add NL to IOC.
}

\medskip

\textbf{Improved basic individual rationality (BIR$^+$):} \; IOC $\cap$ NL

\medskip

\lorip{
\noindent
This principle is stronger than BCR (see 
\cite{LORI-VI-techrep}) even though it is based only on \emph{individual} reasoning.
}
%
\noindent
Since IOC $\subseteq$ SW, we have $\mathrm{BIR}^+\subseteq\mathrm{BIR}$. Note that, unlike BCR, the principle BIR$^+$ is only based on \emph{individual} reasoning.
However, BIR$^+$ is nevertheless stronger than BCR as shown by the following proposition.
\begin{proposition}\label{the: BCR vs. IOC+NL}
$s(\mathrm{BCR})\subsetneq s(\mathrm{BIR^+})$.
\end{proposition}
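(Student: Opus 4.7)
The plan is to establish two things: the inclusion $s(\mathrm{BCR}) \subseteq s(\mathrm{BIR^+})$, and then produce a witness for strictness.

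For the inclusion, I would start from the characterisation of BCR-solvability given in the text: for every player $i$, let $A_i \subseteq C_i$ be the set of non-losing choices in $G$; then there is a player $j$ who has a surely winning choice $c_j$ in the restricted game $G' := G \upharpoonright (A_1,\dots,A_n)$. The crucial observation, which I would isolate as a small lemma, is that \emph{$c_j$ is an optimal choice for player $j$ in the original game $G$.} To see this, note first that $W^j_G(c_j) \supseteq A_1\times\cdots\widehat{A_j}\cdots\times A_n$ since $c_j$ is surely winning in $G'$. Conversely, any tuple in $W^j_G(c_j)$ must consist entirely of non-losing entries (a tuple containing a surely losing choice of some player $i\neq j$ would be losing, contradiction), giving the reverse inclusion. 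The same non-losing argument applied to any other choice $c'_j \in C_j$ yields $W^j_G(c'_j) \subseteq A_1\times\cdots\widehat{A_j}\cdots\times A_n = W^j_G(c_j)$, establishing optimality.

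Once the lemma is in hand, the remainder is straightforward. Suppose every player follows a BIR$^+$-compliant protocol. Player $j$ has optimal choices (since $c_j$ is one), so by IOC she selects some optimal $c^*_j$ with $W^j_G(c^*_j) \supseteq W^j_G(c_j) = A_1\times\cdots\widehat{A_j}\cdots\times A_n$. Each other player $i$ plays by NL and so selects some $a_i \in A_i$. Hence $(a_1,\dots,a_{j-1},a_{j+1},\dots,a_n) \in W^j_G(c^*_j)$, so the resulting profile is winning. This shows $G$ is BIR$^+$-solvable.

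For the strict inclusion, I would re-use the running example $G^*$ from Example~\ref{ex: 3-player game}, for which the preceding discussion already verified both required properties: (i) $G^*$ is IOC-solvable (players $1$ and $3$ have unique optimal choices $a_1$ and $b_3$, and any completion is winning), and since $G^*$ has no surely losing choices, NL imposes no further restriction, so $G^*$ is BIR$^+$-solvable; (ii) $G^*$ has no surely winning and no surely losing choices, so the BCR-reduction does not remove anything and no player has a winning choice in the reduced game, hence $G^*$ is not BCR-solvable. Combined with the inclusion proved above, this yields $s(\mathrm{BCR}) \subsetneq s(\mathrm{BIR^+})$.

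I expect the main conceptual step is the optimality lemma for $c_j$; the argument is short but requires noticing that winning tuples of a non-losing choice necessarily live entirely inside the Cartesian product of non-losing choices. Everything else is routine unpacking of the definitions.
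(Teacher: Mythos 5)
Your proposal is correct and follows essentially the same route as the paper's proof: use the BCR characterisation (after deleting surely losing choices some player has a surely winning choice in the reduced game), observe that such a choice is optimal in the original game so that IOC together with NL for the remaining players yields a win, and witness strictness with the running example $G^*$. The only difference is that you spell out the optimality claim (via the observation that winning extensions of non-losing choices live inside the product of non-losing choice sets), which the paper merely asserts.
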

\begin{proof}
Suppose first that $G\in s(\mathrm{BCR})$. Then each player $i$ has a nonempty set $A_i$ of winning choices in the reduced game after removing all losing choices of all the other players. But now every choice in $A_i$ must be an optimal choice of $i$ in the original game $G$. Hence, by following $\mathrm{BIR^+}$, the player $i$ will play a choice from $A_i$ (by IOC) while all the other players play a non-losing choice (by NL), whence the game is won. Therefore $G\in s(\mathrm{BIR^+})$ and thus $s(\mathrm{BCR})\subseteq s(\mathrm{BIR^+})$.
In Example~
\ref{ex:3.2}
we saw that $G^*$ is solvable with IOC but not with BCR. Therefore $s(\mathrm{BCR})\subsetneq s(\mathrm{BIR^+})$.
\end{proof}

\medskip
We now consider the collective version of IOC:

\medskip

\textbf{Collective optimal choices (COC):} \; \cir(IOC)

\medskip

\lorip{
\noindent
We can show that COC is stronger than $\mathrm{BIR^+}$ and therefore also stronger than BCR (see 
\cite{LORI-VI-techrep}). 
Finally, observe that  in a 2-player \WLC game $G$ where $W_G\neq\emptyset$ \emph{the only optimal choices are those that are winning against all non-losing choices of the other player}. Therefore, in the special case of 2-player \WLC games, it is easy to see that the hierarchy collapses as $s(\mathrm{BCR})=s(\mathrm{BIR}^+)=s(\mathrm{COC})$.
}

\techrep{
\begin{proposition}\label{the: BIR+ vs. COC}
$s(\mathrm{BIR^+})\subsetneq s(\mathrm{COC})$.
\end{proposition}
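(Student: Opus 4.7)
The plan is to prove both the inclusion $s(\mathrm{BIR^+}) \subseteq s(\mathrm{COC})$ and the strictness.

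For the inclusion, suppose $G$ is $\mathrm{BIR^+}$-solvable, and let $A_i$ denote the set of $\mathrm{BIR^+}$-compatible choices of player $i$ in $G$, i.e.\ the optimal choices of $i$ in $G$ if any exist, and otherwise the non-losing choices of $i$. By $\mathrm{BIR^+}$-solvability we have $\prod_i A_i \subseteq W_G$, which forces $W_G \neq \emptyset$, so every player has non-losing choices. Write $G = G_0, G_1, G_2, \ldots$ for the iterated IOC reduction and $P_i^k$ for the set of choices of player $i$ surviving in $G_k$. I aim to show $P_i^2 = A_i$ for every $i$, which gives $G_2 = G \upharpoonright \prod_i A_i$; all its profiles lie in $W_G$ by $\mathrm{BIR^+}$-solvability, so COC solves $G$.

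The key computation is that $W^i_{G_1}(c) = \prod_{j\neq i} A_j$ for every $c \in A_i$. For the inclusion $\subseteq$, take any $c \in P_i^1$ and any $\tau \in W^i_{G_1}(c)$: the profile $(c, \tau) \in W_G$ forces each $\tau_j$ to be non-losing, and combined with $\tau_j \in P_j^1$, a short case split on whether $j$ has optimal choices in $G$ gives $\tau_j \in A_j$. The reverse inclusion for $c \in A_i$ is immediate from $\prod_i A_i \subseteq W_G$. Hence every $c \in A_i$ has the same maximal winning extension $\prod_{j\neq i} A_j$ in $G_1$, while any $c \in P_i^1 \setminus A_i$ is necessarily a losing choice of $i$ in $G$, has empty extension in $G_1$, and is dominated. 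The optimal choices of player $i$ in $G_1$ are therefore exactly $A_i$, giving $P_i^2 = A_i$ as claimed.

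For strictness, I would exhibit the 3-player \WLC game in which each player has choices $\{a_i, b_i\}$ and
\[
W_G = \{(a_1,a_2,a_3),\ (b_1,a_2,a_3),\ (a_1,b_2,a_3),\ (a_1,b_2,b_3),\ (b_1,a_2,b_3)\}.
\]
Direct inspection shows that players $1$ and $2$ have incomparable winning extensions (hence no optimal choices), while $a_3$ is the unique optimal choice of player $3$. Thus $\mathrm{BIR^+}$ permits the profile $(b_1, b_2, a_3) \notin W_G$, so $G \notin s(\mathrm{BIR^+})$. On the other hand, restricting player $3$ to $a_3$ yields a reduced game in which $a_1$ and $a_2$ become the unique optimal choices of players $1$ and $2$; a further IOC iteration collapses to the single winning profile $(a_1, a_2, a_3)$, so $G \in s(\mathrm{COC})$. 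The most delicate step in the plan is the upper bound $W^i_{G_1}(c) \subseteq \prod_{j\neq i} A_j$, which critically uses that losing-in-$G$ choices cannot appear in any winning profile and that IOC's first iteration restricts players with optimal choices in $G$ exactly to their sets $A_j$.
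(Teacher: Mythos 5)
Your argument is correct and follows essentially the same route as the paper's proof: the inclusion is obtained by showing that two rounds of $\cir(\mathrm{IOC})$ shrink the game exactly to the product of the $\mathrm{BIR^+}$-sets $A_i$, and strictness is witnessed by a game in which some players acquire optimal choices only after the first iteration (you use a $3$-player game where the paper uses a $4$-player one, but the mechanism is identical). If anything, your key computation $W^i_{G_1}(c)=\prod_{j\neq i}A_j$ for $c\in A_i$ is a more careful rendering of the paper's looser intermediate claim that every non-losing choice of the once-reduced game is surely winning there --- which, read literally, can fail when two players lack optimal choices and one of them also has losing choices --- since your version establishes optimality rather than sure winning, and optimality is all the second iteration needs.
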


\begin{proof}
We first show that $s(\mathrm{BIR^+})\subseteq s(\mathrm{COC})$. Suppose that a \WLC game $G$ is $\mathrm{BIR^+}$-solvable, i.e., the game is won when every player plays an optimal choice, if they have any, else they play a non-losing choice. Let $G'$ be the game that is obtained after the first collective iteration of \cir(IOC). Now, all the remaining non-losing choices of every player in $G'$ must be winning choices. Since winning choices are also optimal choices, all losing choices are eliminated in the second iteration of \cir(IOC). After that, all combinations of the remaining choices are winning. Thus, the game is won by following COC.

Now, consider the following \WLC game $G^{**}$.
\begin{center}
\begin{tikzpicture}[scale=1.2,choice/.style={draw, circle, fill=black!100, inner sep=2.3pt}]
	\node at (-1,1.3) {$G^{**}:$};
	\node at (0,0) [choice] (00) {};
	\node at (0,0) [choice] (00) {};
	\node at (1,0) [choice] (10) {};
	\node at (2,0) [choice] (20) {};
	\node at (3,0) [choice] (30) {};
	\node at (0,1) [choice] (01) {};
	\node at (1,1) [choice] (11) {};
	\node at (2,1) [choice] (21) {};
	\node at (3,1) [choice] (31) {};
	\draw[thick] (00.south) to (10.south);
	\draw[thick] (10.south) to (20.south);
	\draw[thick, densely dashed] (10) to (20);
	\draw[thick,densely dashed] (01) to (11);
	\draw[thick] (01.north) to (11.north);
	\draw[very thick, dotted, black!50] (01.south) to (11.south);
	\draw[thick] (11.north) to (21.north);
	\draw[thick, densely dashed] (11) to (21);
	\draw[thick, densely dashed] ([yshift=0.1mm]01.south) to (10);
	\draw[very thick, dotted, black!50] ([yshift=0.1mm]11.south) to ([yshift=0.1mm]20.north);
	\draw[thick, densely dashed] (21) to ([yshift=0.1mm]30.north);
	\draw[thick, densely dashed] (20) to (30);
	\draw[very thick, dotted, black!50] (20.north) to (30.north);
	\draw[thick] (20.south) to (30.south);
	\draw[thick] (21.north) to (31.north);
	\node at (0,1.3) {$a_1$};
	\node at (1,1.3) {$a_2$};
	\node at (2,1.3) {$a_3$};
	\node at (3,1.3) {$a_4$};
	\node at (0,-0.35) {$b_1$};
	\node at (1,-0.35) {$b_2$};
	\node at (2,-0.35) {$b_3$};
	\node at (3,-0.35) {$b_4$};
\end{tikzpicture}
\end{center}
Here only players $1$ and $4$ have optimal choices $a_1$ and  $b_4$, respectively, and no player has losing choices. Hence we see that  $G$ cannot be solved with $\mathrm{BIR^+}$.
(By following $\mathrm{BIR^+}$, players may end up selecting the choice profile $(a_1,b_2,a_3,b_4)$ which is not winning.)
However, after the first iteration of \cir(IOC), the players $2$ and $3$ have optimal choices $a_2$ and $b_3$, respectively. Hence, by following COC, the players end up choosing the winning choice profile $(a_1,a_2,b_3,b_4)$.
(Note that we can construct a similar game for $2n$ players, where it takes $n$ iterations of \cir(IOC) for solving it.)
\end{proof}

\medskip
Now, let us consider what happens in the special case of $2$-player \WLC-games. We first observe that \emph{the only optimal choices in a 2-player \WLC game $G$ (where $W_G\neq\emptyset$) are those that are winning against all non-surely losing choices of the other player}. Consequently, when considering 2-player \WLC games, we have $s(\mathrm{IOC})\, \cup \, s(\mathrm{COC})\subseteq s(\mathrm{BCR})$. 
By combining this with the results of Propositions~\ref{the: BCR vs. IOC+NL} and \ref{the: BIR+ vs. COC}, we obtain the following result.

\begin{proposition}\label{the: 2-player case}
For $2$-player \emph{\WLC} games: \; $s(\mathrm{BCR})=s(\mathrm{BIR}^+)=s(\mathrm{COC})$.
\end{proposition}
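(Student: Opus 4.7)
The plan is to leverage the observation stated just before the proposition and the general inclusions $s(\mathrm{BCR}) \subseteq s(\mathrm{BIR}^+) \subseteq s(\mathrm{COC})$ already established in Propositions~\ref{the: BCR vs. IOC+NL} and~\ref{the: BIR+ vs. COC}. Since those give the ``$\subseteq$'' direction of the chain in full generality, the whole proposition reduces to showing the single reverse inclusion $s(\mathrm{COC}) \subseteq s(\mathrm{BCR})$ for $2$-player \WLC games.

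The central tool is the preceding observation: in a $2$-player \WLC game $G$ with $W_G \neq \emptyset$, the optimal choices of player $i$ are exactly those $c \in C_i$ that are winning against every non-surely-losing choice of the other player. I would first record this carefully, since the proof boils down to it: optimality of $c$ means $W^i_G(c) \supseteq W^i_G(c'')$ for every $c'' \in C_i$, and in the $2$-player setting every non-surely-losing opponent choice $c'$ lies in $W^i_G(c'')$ for some $c''$, which forces $c' \in W^i_G(c)$; the converse is a direct verification.

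From this observation I would derive the key lemma: \emph{if $G$ is a $2$-player \WLC game with $W_G \neq \emptyset$ and some player $i$ has an optimal choice in $G$, then $G$ is already BCR-solvable.} Indeed, letting $N_j$ denote the set of non-surely-losing choices of player $j$, any optimal $c$ of player $i$ is non-losing (since $W_G \neq \emptyset$ forces $W^i_G(c) \neq \emptyset$ by dominance) and wins against every element of $N_{3-i}$. Thus $c \in N_i$ is a surely winning choice of $i$ in the reduced game $G \upharpoonright (N_1,N_2)$, which is precisely the condition for BCR-solvability.

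To conclude, take any $2$-player COC-solvable $G$ and consider the first iteration of $\cir(\text{IOC})$. If at least one player has an optimal choice in $G$, the key lemma gives BCR-solvability. Otherwise, for both $i$ one has $A_i^0 = C_i$, so the iteration immediately reaches a fixed point at $G$ itself; COC-solvability then demands $C_1 \times C_2 \subseteq W_G$, in which case every choice is winning and $G$ is trivially BCR-solvable. Combining this inclusion with the general chain closes the loop. The only subtle step is verifying that the iteration of COC genuinely collapses in the $2$-player case — and this is exactly what the observation delivers, because in this regime existence of \emph{any} optimal choice already suffices for BCR-solvability without any iteration at all.
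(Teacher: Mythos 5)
Your proof is correct and follows essentially the same route as the paper: it combines the previously established inclusions $s(\mathrm{BCR})\subseteq s(\mathrm{BIR}^+)\subseteq s(\mathrm{COC})$ with the observation that in a $2$-player game (with $W_G\neq\emptyset$) the optimal choices are exactly those winning against all non-surely-losing choices of the opponent, which yields the reverse inclusion $s(\mathrm{COC})\subseteq s(\mathrm{BCR})$. Your key lemma and the degenerate no-optimal-choice case merely spell out the details that the paper compresses into its one-line ``consequently''.
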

}

\subsection{Elimination of weakly dominated choices} 

In game-theory, rationality is usually associated with elimination of dominated strategies. 
As noted in Section~\ref{ssec: BIR}, \emph{strict} domination is a too strong
concept for \WLC-games.
Weak domination, on the other hand, gives the following principle when applied individually.
 
\medskip

\textbf{Individually rational choices (IRC)}: 
\textsf{Do not play a choice $a$ when there is a better \\
\indent choice $b$ available. That is, if $W^i_G(a) \subsetneq W^i_G(b)$, then the player $i$ should not play $a$. }

\medskip

\lorip{
\noindent
Note that by this definition, when a player follows IRC, she also follows NL and IOC, 
and therefore $s(\mathrm{BIR^+})\subseteq s(\mathrm{IRC})$. The inclusion is, in fact, proper since the following \WLC game $G_\#$ is solvable with IRC but not with $\mathrm{BIR^+}$.

\vspace{-2mm}

\begin{center}
\begin{tikzpicture}[scale=0.45,choice/.style={draw, circle, fill=black!100, inner sep=2.1pt}]
	\node at (-2.5,2)  () {$G_\#$:};
	\node at (0,0) [choice] (00) {};
	\node at (2,0) [choice] (20) {};
	\node at (0,1) [choice] (01) {};
	\node at (2,1) [choice] (21) {};
	\node at (0,2) [choice] (02) {};
	\node at (2,2) [choice] (22) {};
	\node at (0,-1) [choice] (0-1) {};
	\node at (2,-1) [choice] (2-1) {};
	\draw[thick] (02) to (21);
	\draw[thick] (01) to (22);
	\draw[thick] (01) to (21);
	\draw[thick] (01) to (20);
	\draw[thick] (00) to (20);
	\draw[thick] (00) to (21);	
	\draw[thick] (00) to (2-1);
	\draw[thick] (0-1) to (20);
	\node at (-0.7,0) {\small\bf $c_1$};
	\node at (-0.7,1) {\small\bf $b_1$};
	\node at (-0.7,2) {\small\bf $a_1$};
	\node at (-0.7,-1) {\small\bf $d_1$};
	\node at (2.7,0) {\small\bf $c_2$};
	\node at (2.7,1) {\small\bf $b_2$};
	\node at (2.7,2) {\small\bf $a_2$};
	\node at (2.7,-1) {\small\bf $d_2$};
\end{tikzpicture}
\end{center}
\vspace{-2mm}
}
\techrep{
\noindent
Note that by the definition, $\mathrm{IRC}\subseteq\mathrm{NL}\cap\mathrm{IOC}$ and therefore $s(\mathrm{BIR^+})\subseteq s(\mathrm{IRC})$. The inclusion here is proper since there are \WLC games that are solvable with IRC but not with $\mathrm{BIR^+}$. 
%
\begin{example}\label{ex: BIR+ vs. IRC}
Consider the following \WLC game $G_\#$.
\begin{center}
\begin{tikzpicture}[scale=0.5,choice/.style={draw, circle, fill=black!100, inner sep=2.1pt}]
	\node at (-2.5,2)  () {$G_\#$:};
	\node at (0,0) [choice] (00) {};
	\node at (2,0) [choice] (20) {};
	\node at (0,1) [choice] (01) {};
	\node at (2,1) [choice] (21) {};
	\node at (0,2) [choice] (02) {};
	\node at (2,2) [choice] (22) {};
	\node at (0,-1) [choice] (0-1) {};
	\node at (2,-1) [choice] (2-1) {};
	\draw[thick] (02) to (21);
	\draw[thick] (01) to (22);
	\draw[thick] (01) to (21);
	\draw[thick] (01) to (20);
	\draw[thick] (00) to (20);
	\draw[thick] (00) to (21);	
	\draw[thick] (00) to (2-1);
	\draw[thick] (0-1) to (20);
	\node at (-0.7,0) {\small\bf $c_1$};
	\node at (-0.7,1) {\small\bf $b_1$};
	\node at (-0.7,2) {\small\bf $a_1$};
	\node at (-0.7,-1) {\small\bf $d_1$};
	\node at (2.7,0) {\small\bf $c_2$};
	\node at (2.7,1) {\small\bf $b_2$};
	\node at (2.7,2) {\small\bf $a_2$};
	\node at (2.7,-1) {\small\bf $d_2$};
\end{tikzpicture}
\end{center}
In $G_\#$, all players have neither losing choices nor optimal choices, and therefore it cannot be solved with $\mathrm{BIR^+}$. But the choice $b_1$ is better than $a_1$, and likewise $c_1$ is better than $d_1$. (Note that $b_1$ and $c_1$ are not comparable with each other.) Therefore, by following $\mathrm{IRC}$, the player 1 will play either $b_1$ or $c_1$. With the same reasoning, the player 2 will play either $b_2$ or $c_2$, which leads to a win. Therefore $G^\#\in s(\mathrm{IRC})$.
\end{example}
}
\lorip{
We show in 
\cite{LORI-VI-techrep} that IRC is incomparable with COC. However, based on
the observations above, in the 2-player case $s(\mathrm{COC})=s(\mathrm{BIR^+})\subsetneq s(\mathrm{IRC})$.
}

\techrep{
The COC-solvable game $G^{**}$ (in the proof of Proposition~\ref{the: BIR+ vs. COC}) is not solvable with IRC. 
(This is because neither of the moves $a_2$ and $b_2$ (respectively $a_3$ and $b_3$) is better than the other.)
On the other hand, the game $G_\#$ in Example~\ref{ex: BIR+ vs. IRC} is not solvable with COC, and therefore IRC is incomparable with COC in the general case.
However, in the 2-player case $s(\mathrm{COC})\subsetneq s(\mathrm{IRC})$, since then $s(\mathrm{COC})=s(\mathrm{BIR^+})$ by Proposition~\ref{the: 2-player case}.
}

We next assume common belief in IRC. As commonly known (see e.g.  \cite{OR}), \emph{iterated elimination of weakly dominated strategies}  
eventually stabilises in some reduced game 
but different elimination orders may produce different results (cf. Remark~\ref{rem: eliminating weakly dominated choices}).  
However, when applying \cir(IRC), the process will stabilise to a unique reduced game since all weakly dominated choices are always removed \emph{simultaneously}.
By following the next principle, players will play a choice within this reduced game. 

\medskip

\textbf{Collective  rational choices (CRC)}: \; 
\cir(IRC)
  
\medskip

\lorip{
\noindent
For example, $G(Z_3)$ is not solvable with IRC, but can be solved with CRC by doing two collective iterations of IRC. Thus $s(\mathrm{IRC})\subsetneq s(\mathrm{CRC})$. This observation can be generalized as follows:  to solve a game of the form $G(Z_n)$, the players need $n-1$ iterations of IRC. Therefore different numbers of iterations of IRC form a proper hierarchy of CRC-solvable 2-player \WLC games within $s(\mathrm{CRC})$.
}

\techrep{
\noindent
The following example shows that $s(\mathrm{IRC})\subsetneq s(\mathrm{CRC})$.

\begin{example}\label{ex: Iteration of CRC}
Consider the following \WLC game.
\begin{center}
\begin{tikzpicture}[scale=0.5,choice/.style={draw, circle, fill=black!100, inner sep=2.1pt}]
	\node at (-3,2) {$G(Z_3)$:};
	\node at (0,0) [choice] (00) {};
	\node at (2,0) [choice] (20) {};
	\node at (0,1) [choice] (01) {};
	\node at (2,1) [choice] (21) {};
	\node at (0,2) [choice] (02) {};
	\node at (2,2) [choice] (22) {};	
	\draw[thick] (02) to (22);	
	\draw[thick] (01) to (21);
	\draw[thick] (01) to (22);			
	\draw[thick] (00) to (20);
	\draw[thick] (00) to (21);
	\node at (-0.7,0) {\small\bf $c_1$};
	\node at (-0.7,1) {\small\bf $b_1$};	
	\node at (-0.7,2) {\small\bf $a_1$};		
	\node at (2.7,0) {\small\bf $c_2$};
	\node at (2.7,1) {\small\bf $b_2$};	
	\node at (2.7,2) {\small\bf $a_2$};			
\end{tikzpicture}
\end{center}
We first note that $b_1$ is a better choice than $a_1$ and likewise $b_2$ is a better choice than $c_2$. Therefore, by following $\mathrm{IRC}$, player 1 will play a choice from $\{b_1,c_1\}$ and player 2 will play from $\{a_2,b_2\}$, which does not guarantee winning. However, after eliminating $a_1$ and $c_2$, then $b_1$ is better than $c_1$ and $b_2$ is better than $a_2$. Thus, by following $\mathrm{CRC}$ and doing one more iteration of $\cir(\mathrm{IRC})$, player 1 and 2 have only the choices $b_1$ and $b_2$ which are winning.

In $G(Z_3)$, we needed two iterations of $\cir(\mathrm{IRC})$. It is easy to see that in the game $G(Z_4)$ the iterations are done analogously and the fixed point is reached in 3 iterations. Likewise, we can see that $n-1$ iterations of $\cir(\mathrm{IRC})$ are needed for solving $G(Z_n)$. Therefore, the numbers of iterations of $\cir(\mathrm{IRC})$ form a proper hierarchy of $\mathrm{CRC}$-solvable 2-player \WLC games.
\end{example}
}

\begin{remark}\label{rem: eliminating weakly dominated choices}
As discussed in Remark~\ref{rem: iteration order}, if the iterated process of choice elimination is performed in a
non-simultaneous fashion, the resulting
reduced game might be different depending on the selected elimination order. 
This can easily lead to reasoning patterns that may appear sound, but are arguably irrational.
For example, in the game $G(Z_3)$ above, we could first
eliminate the weakly dominated choice $a_1$ of player 1 and then the choices $a_2$ and $c_2$ of player 2,
thereby ending up with the subgame with the choices $b_1,b_2,c_1$ remaining.
The player 1 could then conclude that $b_1$ and $c_1$ are equally good choices for him.
(Note that neither of these choices is weakly dominated in the original game.)
Symmetrically, we could first eliminate the choice $c_2$ of player 2 and 
then the choices $a_1$ and $c_1$ of player 1, ending up with the game with
the choices $b_1,a_2,b_2$. Now player 2 could conclude that the choices $a_2$ and $b_2$
are equally good for him. This way player $1$ could end up choosing $c_1$ (based on
the first reduced game) and player $2$ choice $a_2$ (based on the second reduced game),
whence the players would not coordinate. 

For an example of even more problematic reasoning leading to irrational results, consider the $\mathrm{SW}$-solvable game $G(Z_2)$. Here player 1 could first assume that player 2 is following $\mathrm{SW}$ and eliminate the non-surely winning choice of player 2. Since the remaining choices of player 1 are surely winning in the reduced game, (s)he could then choose either of them. But similarly, player 2 could assume that player 1 is not playing the non-surely winning choice whence it would be fine (for player 2) to play any choice. As a result, the players could end up choosing a non-winning choice profile and lose $G(Z_2)$.
Since the reasoning here is clearly faulty, this demonstrates that the elimination of choices ``blindly'' in an arbitrary order can lead to  irrational behavior.
\end{remark}

\subsection{Symmetry-based principles} 

By \emph{only} following the concept of rationality from game-theory, one \emph{could} argue that CRC reaches the border of rational principles. However, we now will define more principles which are incomparable with CRC but can still be regarded as purely rational, and hence to be followed by all ideally rational players. 
These principles are based on \emph{symmetries} in \WLC games and the assumption that players follow only structural protocols is central here.

We begin with auxiliary definitions. We say that a choice profile $(c_1,\dots,c_n)$ \defstyle{exhibits a bad choice symmetry} if 
$\llbracket c_{1}\rrbracket_1 \times\cdots\times \llbracket c_{n}\rrbracket_n  \not\subseteq W_G$ 
(recall Definition~\ref{def: choice renaming}),  
and that a choice $c$ \defstyle{generates a bad choice symmetry} if $\sigma_c$ exhibits bad choice symmetry for \emph{every} choice profile $\sigma_c$ that contains $c$.

\medskip
\textbf{Elimination of bad choice symmetries (ECS):} \\
\indent 
\textsf{Never play choices that generate a bad choice symmetry, if possible.}
\medskip

\noindent
Why should this principle be considered rational? Suppose that a player $i$ plays a choice $c_i$ which generates a bad choice symmetry. It is now possible to win only if
some tuple $$(c_1,\dots,c_{i-1},c_i,c_{i+1},\dots,c_n)\in W_G$$ is eventually chosen.
However, due to structural symmetry, the players
have \emph{exactly the same reasons} to play in such a way that any other
tuple in $\llbracket c_{1}\rrbracket_1 \times\cdots\times \llbracket c_{n}\rrbracket_n$
is  selected---and that other tuple may possibly be a losing one since 
$\llbracket c_{1}\rrbracket_1 \times\cdots\times \llbracket c_{n}\rrbracket_n \not\subseteq W_G$.

\lorip{\vspace{-1mm}}
\lorip{\begin{example}
Here is a typical example of using ECS. Suppose that the game graph of $G$ has two (or more) connected components that are isomorphic to each other. Since no player can see a difference between those components, all players should avoid playing choices from them.  
With this reasoning, games like $G(1\times 1 + 2(1\times 2))$ can be solved. Note that this game is not CRC-solvable since no player has any weakly dominated choices.
\end{example}}
\lorip{\vspace{-1mm}}

\techrep{
For a typical example of using ECS, suppose that the game graph of $G$ has two (or more) connected components that are isomorphic to each other. Since no player can detect a difference between those components, all players should avoid playing choices from them.  
%
}

\techrep{
\begin{example} 
Consider the \WLC game $G(1\times 1 + 2(1\times 2))$:
\begin{center}
\begin{tikzpicture}[scale=0.5,choice/.style={draw, circle, fill=black!100, inner sep=2.1pt}]
	\node at (0,0.5) [choice] (00) {};
	\node at (2,0) [choice] (20) {};
	\node at (2,1) [choice] (21) {};
	\node at (0,2.5) [choice] (02) {};
	\node at (2,2) [choice] (22) {};
	\node at (2,3) [choice] (23) {};	
	\node at (0,4) [choice] (04) {};
	\node at (2,4) [choice] (24) {};		
	\draw[thick] (04) to (24);	
	\draw[thick] (02) to (22);
	\draw[thick] (02) to (23);
	\draw[thick] (00) to (20);
	\draw[thick] (00) to (21);	
	\node at (-0.7,0.5) {\small\bf $c_1$};
	\node at (-0.7,2.5) {\small\bf $b_1$};	
	\node at (-0.7,4) {\small\bf $a_1$};		
	\node at (2.7,0) {\small\bf $e_2$};
	\node at (2.7,1) {\small\bf $d_2$};
	\node at (2.7,2) {\small\bf $c_2$};
	\node at (2.7,3) {\small\bf $b_2$};	
	\node at (2.7,4) {\small\bf $a_2$};			
\end{tikzpicture}
\end{center}
In this game $b_1\simeq_1 c_1$ and $b_2\simeq_2 c_2\simeq_2 d_2\simeq_2 e_2$. Since all the choice profiles in $\{b_1,c_1\}\times\{b_2,c_2,d_2,e_2\}$ are not winning, we see that both $b_1$ and $c_1$ generate a bad choice symmetry. Likewise, $b_2$, $c_2$, $d_2$ and $e_2$ generate a bad choice symmetry. Therefore, by following ECS, the players will choose $a_1$ and $a_2$.
\end{example} 
 }

While ECS only considers symmetries between similar choices, the next principle takes symmetries \emph{between players} into account. 
Consider a choice profile $\vec c=(c_1,\dots,c_n)$ and let $S^p_i(\vec c):=\{c_i\}\cup(C_i\cap\bigcup_{j\neq i}[c_j])$ for each $i$ 
(recall Definition \ref{deftv}). 
We say that $(c_1,...,c_n)$ \defstyle{exhibits a bad player symmetry} 
if $S^p_1(\vec c)\times\dots\times S^p_n(\vec c)\not\subseteq W_G$
and a choice $c$ \defstyle{generates a bad player symmetry} if $\sigma_c$ exhibits a bad player symmetry for every choice profile $\sigma_c$ that contains $c$.

\medskip
\textbf{Elimination of bad player symmetries (EPS):} \\ 
\indent
\textsf{Never play choices that generate bad player symmetries, if possible.}
\medskip

\noindent
Here the players assume that all players reason similarly, or alternatively, each player wants to 
play so that she would at least coordinate with herself in the case she was to use her 
protocol to make a choice in each player role of a \WLC game.
Suppose that the players have some reasons to select a choice profile $(c_1,\dots,c_n)$. Now, if there are players $i\neq j$ and  a choice $c_j'\in C_j$ such that $c_j'\sim c_i$, then the player $j$ should have the same reason to play $c_j'$ as $i$ has for playing $c_i$. Hence, if the players have their reasons to play $(c_1,\dots,c_n)$, they should have the same reasons to play any choice profile in $S^p_1(\vec c)\times\dots\times S^p_n(\vec c)$. 
Winning is not guaranteed if $S^p_1(\vec c)\times\dots\times S^p_n(\vec c)\not\subseteq W_G$.

It is worth noting that EPS bears a close resemblance to the notion of \emph{superrationality} defined by Hofstadter \cite{Hofstadter}. 

\begin{example}
Consider $\mathrm{EPS}$ in the case of a two-player game \WLC game $G$. If for a given choice $c\in C_1$, there is a structurally equivalent choice $c'\in C_2$ such that $(c,c')\notin W_G$, then by following $\mathrm{EPS}$, player 1 does not play the choice $c$ (and likewise player 2 does not play the choice $c'$).
With this kind of reasoning, some $\mathrm{CRC}$-unsolvable games like $G(1\times 1 + 1\times 2 + 2\times 1)$ below become solvable.
\begin{center}
\begin{tikzpicture}[scale=0.5,choice/.style={draw, circle, fill=black!100, inner sep=2.1pt}]
	\node at (0,0) [choice] (00) {};
	\node at (2,0) [choice] (20) {};
	\node at (0,1) [choice] (01) {};	
	\node at (2,1) [choice] (21) {};
	\node at (0,2) [choice] (02) {};	
	\node at (2,2) [choice] (22) {};
	\node at (0,-1) [choice] (2-1) {};
	\node at (2,-1) [choice] (2-1) {};
	\draw[thick] (00) to (2-1);
	\draw[thick] (02) to (22);
	\draw[thick] (01) to (21);
	\draw[thick] (01) to (20);
	\draw[thick] (0-1) to (2-1);
	\node at (-0.7,2) {\small $a_1$};
	\node at (-0.7,1) {\small $b_1$};
	\node at (-0.7,0) {\small $c_1$};
	\node at (-0.7,-1) {\small $d_1$};
	\node at (2.7,2) {\small $a_2$};
	\node at (2.7,1) {\small $b_2$};
	\node at (2.7,0) {\small $c_2$};
	\node at (2.7,-1) {\small $d_2$};
\end{tikzpicture}
\end{center}

Note also that the game $G^*$ from Example~\ref{ex: 3-player game} is $\mathrm{EPS}$-solvable since both choices $b_1$ and $a_3$ generate a bad player symmetry.
\end{example}
\lorip{\vspace{-1mm}}

\techrep{
\begin{example}
In Example~\ref{ex: Iteration of CRC} we showed that in order to solve $G(Z_n)$ by $\mathrm{CRC}$ it takes $n-1$ collective iterations and after that the ``middle choices'' are selected by both of the players. The game $G(Z_n)$ can also be solved by $\mathrm{EPS}$ and the players will end up choosing the same choices as with $\mathrm{CRC}$. This is because every other choice---except the middle choice---generates a bad player symmetry. 
\end{example}
}

Finally, we introduce a principle that takes both types of symmetries into account.
For a choice profile $\vec c=(c_1,...,c_n)$ let $S_i(\vec c):=C_i\cap\bigcup_j[c_j]$ for each $i$.
We say that $(c_1,...,c_n)$ \defstyle{exhibits a bad symmetry} 
if $S_1(\vec c)\times\dots\times S_n(\vec c)\not\subseteq W_G$, 
and a choice $c$ \defstyle{generates a bad symmetry} if $\sigma_c$ exhibits a bad symmetry for every
choice profile $\sigma_c$ that contains $c$.

\medskip
\textbf{Elimination of bad symmetries (ES):} \\
\indent
\textsf{Never play choices that generate bad symmetries, if possible.}
\medskip

\lorip{
\noindent
It is easy to show that ECS and EPS are not comparable and that they are both weaker than ES. Furthermore,  all symmetry based principles can clearly solve NL-solvable games, but they are incomparable with SW and all the stronger principles.
For proofs of these claims and further examples, see~\cite{LORI-VI-techrep}.
}

\techrep{
\noindent
It is easy to see from the definition of bad symmetry that if a choice $c$ generates either a bad choice symmetry or a bad player symmetry, then $c$ also generates a bad symmetry. Therefore, by using Claim I from Section~\ref{sec:unsolvable}, it is easy to show that $s(\mathrm{ECS}) \cup s(\mathrm{EPS})\subseteq s(\mathrm{ES})$.

By the definitions of ECS and EPS, it is clear that they can solve all NL-solvable games and therefore also ES can solve all NL-solvable games. Furthermore, we can show that the classes of games solvable by ECS, EPS, and CRC are \emph{completely independent} of each other.
See the table in Figure \ref{fig:table}. 

\medskip

\begin{figure}[h]
\begin{center}
\begin{tabular}{|l|c|}
\hline
Class of games $\mathcal{G}$  & Example of a game in the class $\mathcal{G}$ \\
\hline
$s(\mathrm{ECS})\setminus(s(\mathrm{EPS})\cup s(\mathrm{CRC}))$ & $G(1\times 1 + 2(1\times 2))$ \\
\hline
$s(\mathrm{EPS})\setminus(s(\mathrm{ECS})\cup s(\mathrm{CRC}))$ & $G(1\times 1 + 1\times 2 + 2\times 1)$ \\
\hline
$s(\mathrm{CRC})\setminus(s(\mathrm{ECS}\cup s(\mathrm{EPS}))$ & 
\begin{tikzpicture}[scale=0.45,choice/.style={draw, circle, fill=black!100, inner sep=2.1pt}]
	\node at (2,0) [choice] (20) {};
	\node at (0,1) [choice] (01) {};
	\node at (2,1) [choice] (21) {};
	\node at (0,2) [choice] (02) {};
	\node at (2,2) [choice] (22) {};	
	\node at (0,2.5) {};	
	\node at (0,-0.5) {};	
	\draw[thick] (02) to (22);	
	\draw[thick] (01) to (21);
	\draw[thick] (01) to (22);			
	\draw[thick] (01) to (20);
\end{tikzpicture}
\\
\hline
$(s(\mathrm{ECS})\cap s(\mathrm{EPS}))\setminus s(\mathrm{CRC})$ & $G(1\times 1 + 2(2\times 2))$ \\
\hline
$(s(\mathrm{ECS})\cap s(\mathrm{CRC}))\setminus s(\mathrm{EPS})$ & 

\begin{tikzpicture}[scale=0.45,choice/.style={draw, circle, fill=black!100, inner sep=2.1pt}]
	\node at (0,0.5) [choice] (00) {};
	\node at (2,0) [choice] (20) {};
	\node at (2,1) [choice] (21) {};
	\node at (0,1.5) [choice] (02) {};
	\node at (2,2) [choice] (22) {}; 
	\node at (0,2.5) {};	
	\node at (0,-0.5) {};	
	\draw[thick] (02) to (22);
	\draw[thick] (02) to (21);	
	\draw[thick] (00) to (20);
	\draw[thick] (00) to (21);		
\end{tikzpicture}
 \\
\hline
$(s(\mathrm{EPS})\cap s(\mathrm{CRC}))\setminus s(\mathrm{ECS})$ & $G(Z_3)$ \\
\hline
\end{tabular}
\caption{Mutual independence of the principles ECS, EPS, and CRC.}
\label{fig:table}
\end{center}
\end{figure}

\medskip

The \WLC game given on Figure \ref{fig:table} in the class $s(\mathrm{CRC})\setminus(s(\mathrm{ECS}\cup s(\mathrm{EPS}))$  is also unsolvable with ES and therefore ES and CRC are incomparable with each other. This particular game is also SW-solvable, and thus it follows that all symmetry based principles are incomparable with SW. Since ECS and EPS are incomparable and $s(\mathrm{ECS}),s(\mathrm{EPS})\subseteq s(\mathrm{ES})$, it also follows that ES is stronger than both ECS and EPS.

So far we have only presented examples of such ECS-solvable games that contain isomorphic connected components. In the following example, we see how ECS can be used for eliminating moves from a single component. This particular example can also be solved with EPS (and ES) but not with CRC.

\begin{example}
In the \WLC game $G(O_3 + 1\times 1)$, there are no weakly dominated choices and thus it is not $\mathrm{CRC}$-solvable. However, by applying  \emph{ECS}, \emph{EPS} or \emph{ES}, players will play choises $d_1$ and $d_2$ which are winning.
\begin{center}
\begin{tikzpicture}[scale=0.5,choice/.style={draw, circle, fill=black!100, inner sep=2.1pt}]
	\node at (-4.5,2) {\small $G(O_3 + 1\times 1):$};	
	\node at (0,0) [choice] (00) {};
	\node at (2,0) [choice] (20) {};
	\node at (0,1) [choice] (01) {};	
	\node at (2,1) [choice] (21) {};
	\node at (0,2) [choice] (02) {};	
	\node at (2,2) [choice] (22) {};
	\node at (0,-1) [choice] (2-1) {};
	\node at (2,-1) [choice] (2-1) {};
	\draw[thick] (00) to (20);	
	\draw[thick] (00) to (21);
	\draw[thick] (02) to (22);
	\draw[thick] (01) to (22);
	\draw[thick] (02) to (21);
	\draw[thick] (01) to (20);
	\draw[thick] (0-1) to (2-1);
	\node at (-0.7,2) {\small $a_1$};
	\node at (-0.7,1) {\small $b_1$};
	\node at (-0.7,0) {\small $c_1$};
	\node at (-0.7,-1) {\small $d_1$};
	\node at (2.7,2) {\small $a_2$};
	\node at (2.7,1) {\small $b_2$};
	\node at (2.7,0) {\small $c_2$};
	\node at (2.7,-1) {\small $d_2$};
\end{tikzpicture}
\end{center}
\end{example}

Symmetry principles can solve many games whose game graphs consist of several connected components. It is easy to see that none of these kinds of games can be solved with CRC nor with any other principles we have presented in this paper. 
However, when considering games whose graph is connected, it is not obvious whether symmetry principles could solve any more games than CRC. In the next example we show that, indeed, there are  games, whose graph consists only of a single component, such that symmetry principles solve them, but they are not solvable by CRC (nor any other principles presented so far).

\begin{example}
In the \WLC game $G$ below, there are no weakly dominated choices. However, by applying \emph{ECS}, \emph{EPS} or \emph{ES}, players will pick choices $c_1$ and $c_2$ which are winning. (Note here that $G$ is almost of the type of $G(O_5)$, the only difference being a single extra edge that ``forms a diagonal of the 10-cycle''.)
\begin{center}
\begin{tikzpicture}[scale=0.5,choice/.style={draw, circle, fill=black!100, inner sep=2.1pt}]
	\node at (-2,4) {\small $G:$};
	\node at (0,0) [choice] (00) {};
	\node at (2,0) [choice] (20) {};
	\node at (0,1) [choice] (01) {};
	\node at (2,1) [choice] (21) {};
	\node at (0,2) [choice] (02) {};
	\node at (2,2) [choice] (22) {};
	\node at (0,3) [choice] (03) {};
	\node at (2,3) [choice] (23) {};	
	\node at (0,4) [choice] (04) {};
	\node at (2,4) [choice] (24) {};		
	\draw[thick] (04) to (24);
	\draw[thick] (04) to (23);
	\draw[thick] (03) to (24);
	\draw[thick] (03) to (22);
	\draw[thick] (02) to (23);
	\draw[thick] (02) to (22);
	\draw[thick] (02) to (21);
	\draw[thick] (01) to (22);	
	\draw[thick] (01) to (20);	
	\draw[thick] (00) to (20);
	\draw[thick] (00) to (21);	
	\node at (-0.7,0) {\small\bf $e_1$};
	\node at (-0.7,1) {\small\bf $d_1$};
	\node at (-0.7,2) {\small\bf $c_1$};
	\node at (-0.7,3) {\small\bf $b_1$};	
	\node at (-0.7,4) {\small\bf $a_1$};		
	\node at (2.7,0) {\small\bf $e_2$};
	\node at (2.7,1) {\small\bf $d_2$};
	\node at (2.7,2) {\small\bf $c_2$};
	\node at (2.7,3) {\small\bf $b_2$};	
	\node at (2.7,4) {\small\bf $a_2$};			
\end{tikzpicture}
\end{center} 
\end{example} 
}

\subsection{Compatibility of ES and CRC}

Recall that ES is the strongest of the symmetry principles and CRC is the strongest of all the other principles presented in this paper. Since they are incomparable (with respect to solvable games), it is natural to ask whether they can be combined.
We first prove the following lemma about the relationship between IRC and ES.

\begin{lemma}\label{dominance vs. symmetries}
Let $G=(A,C_1,\dots,C_n,W_G)$ be \WLC game and let $c\in C_i$ for some $i\leq n$ be a choice that does not generate a bad symmetry. Then none of the choices $d\in C_i$ which are better than $c$ can generate a bad symmetry.
\end{lemma}

\begin{proof}
For the sake of contradiction, assume that there is a choice $d\in C_i$ which generates bad symmetry and which is better than $c$. We first observe that the following holds (by the definition of structural equivalence).
\begin{align*}
	\text{If $d\sim d'$ for some $d'\in C_j$ ($j\leq n$), then there is a choice $c'\in C_j$} \tag{$\star$} \\
	\text{such that $c\sim c'$ and $d'$ is better $c'$.}
\end{align*}

Because $c$ does not generate a bad symmetry, there is a choice profile $\vec c$ of the form $(c_1,\dots, c, \dots c_n)$ which does not exhibit a bad symmetry, i.e. $S_1(\vec c\,)\times\cdots\times S_n(\vec c\,)\subseteq W_G$. In particular $\vec c\in W_G$. Let $\vec d:=(c_1,\dots,d,\dots,c_n)$. Since $d$ is better than $c$, we have $\vec d\in W_G$. Because $d$ generates a bad symmetry, $\vec d$ must exhibit a bad symmetry. Therefore there is a choice profile $\vec e\in S_1(\vec d\,)\times\cdots\times S_n(\vec d\,)$ such that $\vec e\notin W_G$.

Now $\vec e$ must contain at least one choice $d'$, for which $d\sim d'$, as otherwise we would have $\vec e\in S_1(\vec c\,)\times\cdots\times S_n(\vec c\,)\subseteq W_G$. Now, by $(\star)$, for every such choice $d'$, there is a choice $c'$ such that $c\sim c'$ and $d'$ is better than $c'$. Let $\vec e\,[c'/d']$ be a choice profile which is obtained from $\vec e$ by replacing every $d'$, for which $d\sim d'$, with a corresponding choice $c'$. Since $\vec e\notin W_G$ and all choices $d'$ are better than the corresponding choice $c'$, we must have $\vec e\,[c'/d']\notin W_G$. But this is a contradiction since $\vec e\,[c'/d']\in S_1(\vec c\,)\times\cdots\times S_n(\vec c\,)\subseteq W_G$.
\end{proof}

\medskip

We are now ready to show that ES and CRC can be combined simply by taking their intersection.

\begin{proposition}
In every \WLC game it is possible to follow both the principle $\mathrm{ES}$ and the principle $\mathrm{CRC}$, i.e., $\mathrm{CRC}\cap\mathrm{ES}\neq\emptyset$.
\end{proposition}

\begin{proof}
Let $G$ be a \WLC game and let $i$ be a player in $G$. We first note that if all choices of $i$ generate a bad symmetry, then the principle ES does not set any restrictions and thus, by following CRC, the player $i$ also trivially follows $\mathrm{CRC}\cap\mathrm{ES}$.

Suppose then that $i$ has at least one choice $c$ which does not generate a bad symmetry. Now, by Lemma~\ref{dominance vs. symmetries}, there cannot be any choice $d$ which is better than $c$ such that $d$ generates a bad symmetry. Therefore it follows, by induction on the steps of the choice elimination process, that the iteration of IRC cannot remove all choices of $i$ which do not generate bad symmetry. Hence in the reduced game, obtained by CRC, there must still be at least one choice which does not generate a bad symmetry. Now $i$ will follow $\mathrm{CRC}\cap\mathrm{ES}$ by selecting any such choice.
\end{proof}

\medskip

As shown by the following example, $s(\mathrm{CRC})\cup s(\mathrm{ES}) \subsetneq s(\mathrm{CRC}\cap\mathrm{ES})$. And
therefore $\mathrm{CRC}\cap\mathrm{ES}$ is the strongest of all the principles we have presented so far.

\begin{example}
Consider the following \WLC game:
\begin{center}
\begin{tikzpicture}[scale=0.5,choice/.style={draw, circle, fill=black!100, inner sep=2.1pt}]
	\node at (-2.5,4) {$G^\dagger:$};
	\node at (2,0) [choice] (20) {};
	\node at (0,1) [choice] (01) {};
	\node at (2,1) [choice] (21) {};
	\node at (0,2) [choice] (02) {};
	\node at (2,2) [choice] (22) {};	
	\node at (0,3) [choice] (03) {};
	\node at (2,3) [choice] (23) {};	
	\node at (0,4) [choice] (04) {};
	\node at (2,4) [choice] (24) {};	
	\draw[thick] (04) to (24);
	\draw[thick] (03) to (23);
	\draw[thick] (02) to (22);	
	\draw[thick] (01) to (21);
	\draw[thick] (01) to (22);			
	\draw[thick] (01) to (20);
%
	\node at (-0.7,1) {\small\bf $d_1$};	
	\node at (-0.7,2) {\small\bf $c_1$};
	\node at (-0.7,3) {\small\bf $b_1$};	
	\node at (-0.7,4) {\small\bf $a_1$};		
	\node at (2.7,0) {\small\bf $e_2$};	
	\node at (2.7,1) {\small\bf $d_2$};	
	\node at (2.7,2) {\small\bf $c_2$};
	\node at (2.7,3) {\small\bf $b_2$};	
	\node at (2.7,4) {\small\bf $a_2$};		
\end{tikzpicture}
\end{center}
By following $\mathrm{CRC}$ in $G^\dagger$, the player 1 will not play the choice $c_1$ and the player 2 will not play $d_2$ or $e_2$. And by following $\mathrm{ES}$, the player 1 will play neither the choice $a_1$ nor $b_1$ and likewise the player 2 will play neither the choice $a_2$ nor $b_2$. Therefore $G^\dagger$ cannot be surely won by following either $\mathrm{CRC}$ or $\mathrm{ES}$ alone. But by following $\mathrm{CRC}\cap\mathrm{ES}$, the player 1 will select $d_1$ and the player 2 will select $c_2$, whence the game is surely won.
\end{example}

It is important to note that, by following $\mathrm{CRC}\cap\mathrm{ES}$, the principle ES is applied to the \emph{original} game and not to the reduced game that is obtained by iterating IRC. If we applied ES to the reduced game instead, this would lead to to a different kind of principle that allows (arguably) irrational choices. See the following example.

\begin{example}
Consider the following \WLC game:
\begin{center}
\begin{tikzpicture}[scale=0.5,choice/.style={draw, circle, fill=black!100, inner sep=2.1pt}]
	\node at (-5.5,4) {$G(2(1\times 1)+Z_3):$};
	\node at (0,0) [choice] (00) {};
	\node at (2,0) [choice] (20) {};
	\node at (0,1) [choice] (01) {};
	\node at (2,1) [choice] (21) {};
	\node at (0,2) [choice] (02) {};
	\node at (2,2) [choice] (22) {};	
	\node at (0,3) [choice] (03) {};
	\node at (2,3) [choice] (23) {};	
	\node at (0,4) [choice] (04) {};
	\node at (2,4) [choice] (24) {};	
	\draw[thick] (04) to (24);
	\draw[thick] (03) to (23);
	\draw[thick] (02) to (22);	
	\draw[thick] (01) to (21);
	\draw[thick] (01) to (22);	
	\draw[thick] (00) to (21);			
	\draw[thick] (00) to (20);
	\node at (-0.7,0) {\small\bf $e_1$};	
	\node at (-0.7,1) {\small\bf $d_1$};	
	\node at (-0.7,2) {\small\bf $c_1$};
	\node at (-0.7,3) {\small\bf $b_1$};	
	\node at (-0.7,4) {\small\bf $a_1$};		
	\node at (2.7,0) {\small\bf $e_2$};	
	\node at (2.7,1) {\small\bf $d_2$};	
	\node at (2.7,2) {\small\bf $c_2$};
	\node at (2.7,3) {\small\bf $b_2$};	
	\node at (2.7,4) {\small\bf $a_2$};		
\end{tikzpicture}
\end{center}
By iterating $\mathrm{IRC}$ on $G(2(1\times 1)+Z_3)$, the choices $c_1$, $c_2$, $e_1$ and $e_2$ are eliminated and we obtain the following reduced game:
\begin{center}
\begin{tikzpicture}[scale=0.5,choice/.style={draw, circle, fill=black!100, inner sep=2.1pt}]
	\node at (0,0) [choice] (00) {};
	\node at (2,0) [choice] (20) {};
	\node at (0,1) [choice] (01) {};
	\node at (2,1) [choice] (21) {};
	\node at (0,2) [choice] (02) {};
	\node at (2,2) [choice] (22) {};
	\draw[thick] (00) to (20);
	\draw[thick] (01) to (21);
	\draw[thick] (02) to (22);	
	\node at (-0.7,0) {\small\bf $d_1$};
	\node at (-0.7,1) {\small\bf $b_1$};	
	\node at (-0.7,2) {\small\bf $a_1$};		
	\node at (2.7,0) {\small\bf $d_2$};
	\node at (2.7,1) {\small\bf $b_2$};	
	\node at (2.7,2) {\small\bf $a_2$};		
\end{tikzpicture}
\end{center}

Since all the choices in this reduced game generate a bad symmetry, the game is not solvable by $\mathrm{ES}$.
However, by following $\mathrm{CRC}\cap\mathrm{ES}$ in $G(2(1\times 1)+Z_3)$, the players will select the choice profile $(d_1,d_2)$ and win. (In this particular example, the same result is obtained by following $\mathrm{ES}$ alone as all the other choice profiles exhibit a bad symmetry.)

Notice that the choice profile $(d_1,d_2)$ does not exhibit a bad symmetry in the original game, but it exhibits a bad symmetry in the reduced game after the iteration of $\mathrm{IRC}$. By this example, we claim that it is questionable to apply $\mathrm{ES}$ in any reduced game that is obtained by eliminating choices that are excluded by other principles. 
\end{example}

\section{On the limits of pure rationality} 

How far can we go up the hierarchy of purely rational principles? This seems a genuinely difficult question.
In this section we will first study the compatibility of purely rational principles and then present two simple principles which are naturally justified, but go beyond pure rationality. Finally we will present a complete hierarchy (with respect to solvable games) of all principles defined in this paper and discuss whether one can identify a strongest purely rational principle.

\subsection{Merging purely rational principles}\label{sec: Merging purely rational principles}

Since we have defined that purely rational principles are followed by all ideally rational players, it is clear that purely rational principles are closed under intersections. That is, if $\mathrm{P}_1$ and $\mathrm{P_2}$ are purely rational, then $\mathrm{P}_1\cap\mathrm{P_2}$ is purely rational, too. This gives us a method to identify which principles cannot be purely rational: if $\mathrm{P_1}$ or $\mathrm{P_2}$ are \defstyle{incompatible}, in sense that $\mathrm{P}_1\cap\mathrm{P_2}=\emptyset$,
then it must be that at least one of $\mathrm{P_1}$ and $\mathrm{P_2}$ is not purely rational.
However, even if the (extensions of) $\mathrm{P}_1$ and $\mathrm{P}_2$ are incompatible in this way, sometimes it may be that the intensions (recall Remark \ref{rem: Intensions}) behind both $\mathrm{P}_1$ and $\mathrm{P}_2$ seem rational. See the following example.

\begin{example}
For each each positive integer $k$, consider the following principle $\mathrm{P}_k$\hspace{0.5mm}:

\medskip 

\centerline{
If the game contains two or more copies of $G(k\times k)$ as disjoint components,
}
\centerline{
then do not choose from these components, if possible.
}

\medskip

\noindent
This may seem a purely rational principle for all $k$, as it is obvious that if
there are two or more disjoint components $k\times k$ in a game,
then choosing from such a component means that a win cannot by
guaranteed due to the same symmetry-based 
idea as the one used in the justification of $\mathrm{ECS}$.
However, for example $\mathrm{P}_2$ and $\mathrm{P}_3$ are not compatible, as
for example in the game $G(2(2\times 2) + 2(3\times 3))$, following 
both $\mathrm{P}_2$ and $\mathrm{P}_3$ would
leave the players with no choices at all. 
%
%
%
(Note, however, that the symmetry principle $\mathrm{ECS}$ 
deals with all $\mathrm{P}_k$, for all $k$, simultaneously).
\end{example}

Even if the extensions of the principles $\mathrm{P_1}$ and $\mathrm{P_2}$ are incompatible, it is still possible that the informal background \emph{intuition} of each of their \emph{intensions} is rational and thus $\mathrm{P_1}$ and $\mathrm{P_2}$ can be combined in some natural way by taking both of the intensions into account. Suppose, e.g., that the intension of $\mathrm{P_1}$ is ``do not play choices of type A, if possible'' and the intension of $\mathrm{P_2}$ is ``do not play choices of type B, if possible'' (as in the example above).
Now we can naturally combine these intensions into a new intension: ``do not play choices of type A or B, if possible''. The resulting principle will now prescribe \emph{any choice}  in those games where all choices are ``bad'' due to being either of type A or B.

From these observations we point out that we may not be sure that (the extension of) a princinciple P is purely rational unless we know that it is compatible with all purely rational principles. Nevertheless, we may still know that the intuition behind the intension of P is rationally justified.

\subsection{Probabilistic reasoning vs. Occam razor} 

We now mention---without providing precise formal definitions---two structural principles for which it may  seem somewhat controversial to claim them 
purely rational in our sense,
but they are definitely meaningful and natural nevertheless.

The first one is the \defstyle{principle of probabilistically optimal reasoning (PR)}.
Informally put, this principle prescribes to always play a choice that
has as large winning extension as possible. 
Such choices have the highest probability of winning, supposing that all the other players play randomly (but \emph{not} if the others follow PR, too: consider e.g. $G(1\times 2 + 2\times 1)$).  Note, however, that following PR can violate the symmetry principles, as demonstrated by the game $G(2(2\times 2) + 1\times 1)$. So, its application should be suitably restricted, as part of its formulation, to be only applied when none of the so far discussed principles of pure rationality apply.


With PR one can solve games like $G(1\times 1 + 2\times 2)$ that are unsolvable with all other principles presented  here. 
However, in $G(1\times 1 + 2\times 2)$ one could also reason (perhaps less convincingly) that both players should pick their choices from the subgame $G(1\times 1)$ since that is the `\emph{simplest}' (and, also the only `\emph{unique}') winning choice profile.
We call this kind of reasoning the \defstyle{Occam razor principle (OR)}. 
Its intuitive and inevitably imprecise formulation is ``\textsf{If there is one simplest way to coordinate, make that choice.}'' This principle relates to the idea of \emph{focal points} \cite{mehta1994focal}, \cite{Schelling60}, \cite{sugden1995theory}. 
%
Different technically precise formulations are possible, but it is not easy to see which one of them (if any) would best capture the spirit of the principle. 

Note that $G(1\times 1 + 2\times 2)$ can be won if both players follow PR or if both follow OR, but not if one follows PR while the other follows OR. 
Moreover, in this game it is impossible for a player to follow \emph{both} PR and OR. Hence, \emph{at least one} of these principles is not purely rational. 
Actually, it can be argued that \emph{neither of them} is purely rational. 
%
%

\begin{remark}
Consider a setting where players have good reasons to assume that all the other players are playing randomly. (One could argue that they can make this assumption ``by default'' if there is no common belief in rationality.) Now, if a player does not have a winning choice, then winning is not guaranteed. Therefore it makes sense to ``optimize the expectation of winning'' and therefore follow the principle $\mathrm{PR}$. If all the players happen to reason this way, then games like $G(1\times 1 + 2\times 2)$ are in fact won.

However, if we assume common belief in rationality, then it seems obvious that the players can no longer assume that the other players play randomly. Therefore the justification for following $\mathrm{PR}$ becomes questionable, and thus, it is no more clear whether following $\mathrm{PR}$ is rational in games like $G(1\times 1 + 2\times 2)$. One could even argue, based on this observation, that \emph{common belief in rationality can sometimes be harmful for the players}.
This is because the use of $\mathrm{PR}$ was based on the assumption that the other players are playing randomly, and this is no longer the case if they are assumed to be rational. 
\end{remark}


\subsection{Hierarchy of rationality principles with respect to solvable games} 

The partially ordered diagram in Figure~\ref{fig:table2} presents the hierarchy of solvable games with the rationality principles that we have presented in this paper. The principles that only use individual reasoning are put in normal (single) frames and the ones that use collective reasoning have double frames. 
%
Note that the diagram in Figure~\ref{fig:table2} is complete in the sense that no new lines can be added to it (neither in the general nor in the 2-player case).


\begin{figure}[h]
\begin{center}
\begin{tikzpicture}
	[scale=0.6,place/.style={align=flush center, rounded corners, font=\footnotesize,
	rectangle,draw=black!100,fill=black!0,thick,inner sep=3pt,minimum size=4mm}]
	\node [rotate=90] at (4.5,6.5) {\Large $\subsetneq$};
	\node at (0,1) [place] (FIR) {$s(\mathrm{FIR})$};
	\node at (4.0,2) [place] (SW) {$s(\mathrm{SW})$};
	\node at (-4.0,2) [place] (NL) {$s(\mathrm{NL})$};	
	\node at (0,3) [place] (BIR) {$s(\mathrm{BIR})$};
	\node at (0,5) [place,double] (BCR) {$s(\mathrm{BCR})$};
	\node at (4,4) [place] (IOC) {$s(\mathrm{IOC})$};
	\node at (0,7) [place] (BIR+) {$s(\mathrm{BIR}^+)$};
	\node at (3.2,8.25) [place,double] (COC) {$s(\mathrm{COC})$};
	\node at (-3.2,9.25) [place] (IRC) {$s(\mathrm{IRC})$};
	\node at (0,10.5) [place,double] (CRC) {$s(\mathrm{CRC})$};
	\node at (-1,12.5) [place,double] (CRC+ES) {$s(\mathrm{CRC\cap ES})$};
	\node at (-6.4,5) [place] (ELS) {$s(\mathrm{ECS})$};
	\node at (-3.7,5) [place] (ELR) {$s(\mathrm{EPS})$};
	\node at (-6.4,9.25) [place] (ES) {$s(\mathrm{ES})$};
	\draw[thick] (FIR.east) to (SW);
	\draw[thick] (NL) to (FIR.west);
	\draw[thick] (NL) to (BIR.west);
	\draw[thick] (BIR.east) to (SW);
	\draw[thick] (BIR) to (IOC);
	\draw[thick] (IOC) to (BIR+);
	\draw[thick] (BIR) to (BCR);
	\draw[thick] (BIR+) to (IRC);
	\draw[thick] (IRC) to (CRC);
	\draw[thick] (COC) to (CRC);
	\draw[thick] (NL) to (ELS);
	\draw[thick] (NL) to (ELR);
	\draw[thick] (ELS) to (ES);
	\draw[thick] (ELR) to (ES);
	\draw[thick] (CRC) to (CRC+ES);
	\draw[thick] (ES) to (CRC+ES);
	\draw[thick, double,double distance=0.4mm] (BCR) to (BIR+);
	\draw[thick, double,double distance=0.4mm] (BIR+) to (COC);
	\draw[thick, dashed] (IOC) to (BCR);
	\draw[thick, dashed] (COC) to (IRC);
\end{tikzpicture}
\end{center}
\caption{Hierarchy of rationality principles with respect to solvable games.
Normal lines represent proper inclusions in both the general \emph{and} 2-player case.
\emph{Double} lines represent proper inclusions in the general case---in the 2-player case there is an identity. 
\emph{Dashed} lines represent proper inclusions in the 2-player case---in the general case the two sets are not comparable.
}
\label{fig:table2}
\end{figure}

It is natural to ask whether there exists a \emph{strongest} purely rational principle. That is, a principle P which is followed by all ideally rational players and which can solve more games than any other purely rational principle. Such a principle indeed exists, since we can simply take the \emph{intersection} of \emph{all} purely rational principles to obtain a principle which is purely rational and which can solve all games that are solvable by some purely rational principle. Alternatively, we obtain the (same) strongest purely rational principle by forming a set of all \emph{protocols} corresponding to all ideally rational players (recall that rational players can be identified with the protocols that they follow).

However, even though there really exists a strongest purely rational principle, we cannot define it explicitly, unless we know all the ideally rational players  (which we have taken as a primitive notion). Of the principles presented in this paper, $\mathrm{CRC\cap ES}$ is the strongest one which could be claimed to be purely rational, but we leave it open whether this principle could be strengthened any further by intersecting it with some other purely rational principles\footnote{It is also
conceivable that $\mathrm{CRC\cap ES}$ may turn out not purely rational due to being incompatible with some other principle that is found to be purely rational. But even if this was the case, we could still argue that the \emph{informal background intuitions} behind CRC and ES are purely rational, even if their formal extensions (i.e., the corresponding sets of 
protocols) are not (cf. Section~\ref{sec: Merging purely rational principles}).}.

\section{Coordination with structural conventions \\ and structurally unsolvable games}\label{sec:conventions}\label{sec:unsolvable}



So far we have presented several principles with different levels of justification for being purely rational and 
therefore naturally applicable without any preplay communication between the players. 
However, in general, many structural
principles lack a rational justification and they may even look completely arbitrary.
Such principles can be called \emph{structural conventions}.
Even though possibly arbitrary, structural conventions could nevertheless be explicitly 
agreed upon by a group of players.
Alternatively, they could also be considered to
emerge naturally with little or no explicit negotiations. (Even highly arbitrary
conventions could conceivably be followed in different, possibly hypothetical, communities or cultures.)

%
We now discuss briefly how \WLC games can be solved if the players can 
negotiate (communicate) \emph{before} they are presented with the \WLC game they are to play.
We make the \emph{assumption} that via preplay communication, the players
can agree on which \emph{structural principles} they will use in the game.
Thus the assumption here is that the players can discuss the details of different games only up to
structural equivalence.
This assumption is natural, e.g., in scenarios where there is no obvious reason for preferring some
names of potential game choices with the expense of others. It can even be plainly impossible to
distinguish structurally equivalent choices. In these kinds of scenarios it is
natural to assume reasoning only up to structural equivalence.

Even when entirely arbitrary structural principles (i.e., structural conventions) are 
allowed, there exist \WLC games that cannot be solved.
Games that cannot be solved with \emph{any} structural principle
are called  \defstyle{structurally unsolvable}. Probably the simplest nontrivial example of such a game is 
$G(2(1\times 1))$.
Next we will characterise the class of all structurally unsolvable \WLC games.

We say that $G$ is \defstyle{structurally indeterminate} 
if all choice profiles in $W_G$ exhibit a bad symmetry (recall the definition of the principle ES).
%
%
%
%
For example,
the game $G(1\times 2 + 2\times 1 + 3\times 3 + 3\times 3)$ is
structurally indeterminate because every choice profile exhibits a bad symmetry.
%
%
On the other hand, the game 
$G(1\times 2 + 2\times 1 + 3\times 3 + 4\times 4)$ is
\emph{not} structurally indeterminate, as none of the winning choice profiles with nodes of degree $3$
or $4$ exhibit a bad symmetry.

\begin{theorem}
No structural principle can solve a structurally indeterminate game.
\end{theorem}

\begin{proof}
For the sake of contradiction, suppose that there is a structural principle P and a structurally indeterminate \WLC game $G$ such that $G\in s(\mathrm{P})$. Let $\Sigma$ be any protocol in the principle P. Since P is a structural principle, $\Sigma$ must be a structural protocol. Since $\mathrm{P'}\subseteq\mathrm{P}$ implies $s(\mathrm{P})\subseteq s(\mathrm{P'})$, the also the singleton principle $\{\Sigma\}$ solves $G$. 

Let $(u_1,\dots,u_n)\in \Sigma(G,1)\times\cdots\times\Sigma(G,n)$. Since $G$ is structurally indeterminate, $(u_1,\dots,u_n)$ must exhibit a global losing symmetry. Therefore there is a choice profile $(u_1',\dots,u_n')\in U_1\times\cdots\times U_n$ such that $(u_1',\dots,u_n')\notin W_G$. Since $\Sigma$ is a structural protocol, we must have $(u_1',\dots,u_n')\in \Sigma(G,1)\times\cdots\times\Sigma(G,n)$. Since $(u_1',\dots,u_n')\notin W_G$, we have $\Sigma(G,1)\times\cdots\times\Sigma(G,n)\not\subseteq W_G$. Therefore $\{\Sigma\}$ does not solve $G$, which is a contradiction.
\end{proof}

\medskip
The characterisation given by the theorem above is optimal in the sense that all games that are not structurally indeterminate, can be solved by some structural principle. This follows from the following even stronger result.

\begin{theorem}
There exists a protocol $\Sigma$ such that the principle $\{\Sigma\}$ solves all \WLC games that are not structurally indeterminate.
\end{theorem}

\techrep{
\begin{proof}
The idea is simply to define a protocol that chooses, on an
arbitrary input $(G,i) = ((A,C_1,\dots ,C_n,W_G),i)$,
where $G$ is not structurally indeterminate, a 
node from a tuple of $W_G$ that does not exhibit global losing symmetry.
The only difficulty is that there may be several such tuples in $G$, and these
tuples do not necessarily form a Cartesian product. We next briefly describe how to circumvent this problem. Informally, we will just
consider different linear orderings of the input structure (as well as
structures obtained from it by a player renaming) and choose the
lexicographically smallest suitable tuple.

Firstly, we use the encoding of relational structures by binary
strings given in Chapter 6 of \cite{libki}. Within this standard
encoding scheme, every string
encoding of a structure requires a linear 
ordering of the domain of the encoded structure. The linear order then defines a
lexicographic order of the tuples of the structure. The
order is used for encoding each relation $C_i$ (with $C_1$
first and $C_n$ last) and also $W_G$.
We note that a single structure will typically have several
encodings, as different linear orderings of the domain can
define different encoding strings. 
Non-isomorphic structures will never share the same encoding.

Now, given an input $(G,i) = ((A,C_1,\dots ,C_n,W_G),i)$, we do the following.
We first define the finite set $\mathcal{G}$ that contains
exactly all structures that can be obtained from $G$ by a 
player renaming, including $G$ itself. We will below refer to
the structures in $\mathcal{G}$ as \emph{renamings of} $G$.
We note that no two structures in $\mathcal{G}$ are isomorphic,
but all structures in $\mathcal{G}$ share the same domain.
Then---having defined $\mathcal{G}$---we
investigate the finite set $S$ that that contains, for
\emph{every} linear ordering $<_A$ of $A$ and 
\emph{every} structure in $\mathcal{G}$, the binary encoding of that 
structure with respect to $<_A$.
We choose the string $s\in S$ with the least binary number.
Let $G_s\in \mathcal{G}$ be a structure encoded by $s$. 
Using $s$, we choose the lexicographically smallest tuple $\vec{c}$
from the winning relation of $G_s$ that does not
exhibit global losing symmetry.\footnote{There can be
more than one linear ordering $<_A$ leading to
the string $s$, so there can be many ways to
choose the tuple $\vec{c}$. This will not be a
problem for us; we will see that it makes no
difference which tuple $\vec{c}$ we
choose. Note that there will always be an isomorphism
connecting different such tuples $\vec{c}$.}
Thus, altogether,
we obtain a renaming $G_s$ of $G$ together with the tuple $\vec{c}\in W_{G_s}$ that
does not exhibit global losing symmetry.

Let $c_j$ denote some coordinate of $\vec{c}$
such that there exists a full renaming $(\beta,\pi)$ from $G_s$ to $G$ that
associates the player role number $j$ with the player role number $i$. There may be 
several such coordinates $j$ and several renamings $(\beta,\pi)$ for $j$.
Let $D$ be the subset\footnote{It will become clear that $D$ is independent of
which tuple $\vec{c}$ we chose.} of the domain of $G$ that contains
exactly all choices $c_i$ of player $i$ in $G$ that can be obtained by 
such renamings $(\beta,\pi)$ and coordinates $j$. The
desired protocol outputs $D$ on the input $(G,i)$.
\end{proof}
}

\medskip

\lorip{For a proof, see \cite{LORI-VI-techrep}.} 

As discussed above, structural conventions may be quite artificial and arbitrary, and certainly cannot always be considered purely rational. It seems very difficult to separate, in a natural and commonly acceptable way, all purely rational principles from structural conventions.

\lorip{\vspace{-1mm}}
\section{Concluding remarks} 
\lorip{\vspace{-1mm}}

We have proposed and studied a hierarchy of principles of rational coordination that can be applied in rational players' reasoning about how to act in pure coordination scenarios. To make our study precise we have formalised such scenarios as \WLC games and have compared the scope of applicability and strength of the various principles in terms of the classes of \WLC games solvable by them. One major conclusion we draw is that 
the very questions of what is rational reasoning and what are rational choices in 
pure coordination games without preplay communication or conventions, appears to be very subtle and non-trivial. In particular, it seems very difficult to separate purely rational principles from others that can be employed in rational players' reasoning when there is no purely rational solution. On the other hand, there is a precise technical description of the class of  \WLC games that are solvable when structural conventions are enabled. 

A number of conceptual and technical issues arising from the present study remain open for further exploration. 
\begin{itemize}
\item To begin with, in this paper we have focused on scenarios where players look for choices that \emph{guarantee} winning if a suitable rational principle is followed. But it is very natural to ask how players should act in a game which seems not  solvable by any purely rational principle\footnote{We note that if players were ultimately interested \emph{only} in \emph{guaranteeing} a win in \WLC-games, even the non-losing principle NL could be questioned. This is simply because, in structurally unsolvable games (such as $G(2(1\times 1) + \overline{1\times 1})$), winning is not guaranteed by any structural principles---regardless whether players follow NL or not. This demonstrates that rational players should, of course, not only be interested in guaranteeing a win when that is possible. 
}. 
If players cannot guarantee a win, it is natural to assume that they should at least try to maximize somehow their collective chances of winning, say, by considering protocols involving some \emph{probability distribution} over their choices. 

\item Another natural extension of the present framework is to consider non-structural principles based on limited preplay communication and use of various types of non-structural conventions based on some additional features of the game representation, e.g., partial priority orders of players, colours of choices, etc. 
For a general exploration of \WLC games with enriched game representation see \cite{DBLP:conf/eumas/GorankoKR17}.

\item Studying pure dis-coordination games and games involving combinations of coordination and dis-coordination types of players are major potential directions for further work. 
\end{itemize}

Finally, to provide contrast with the theoretical work presented in this paper, in the future we plan to run empirical experiments on people's behaviour \WLC games.

\subsection*{Acknowledgements}

The work of Valentin Goranko was partly supported by a research grant 2015-04388 of the Swedish Research Council. 
The work of Antti Kuusisto was supported by the ERC grant 647289 ``CODA"
and a part of the work also by the Academy of Finland 
grants  438 874 and 209 365.
The work of Raine R\"onnholm was partly supported by Jenny and Antti Wihuri Foundation.
We thank the reviewers of this paper, as well as the reviewers of \cite{sr2017}, \cite{lori2017}, and \cite{DBLP:conf/eumas/GorankoKR17}, for helpful remarks and references.

\bibliographystyle{plain}
\bibliography{GT-Bibliography}

\end{document}